\newtheorem{proposition}{Proposition}
\theoremstyle{remark}
\newtheorem{remark}{Remark}
\title{Decision making with dynamic probabilistic forecasts\thanks{We thank Zorana Grbac for insightful discussions at an early stage of this project. Financial support from the Agence Nationale de Recherche
(project EcoREES ANR-19-CE05-0042) and from the
FIME Research Initiative is gratefully acknowledged. }}
\author{Peter Tankov\footnote{Corresponding author, email: \texttt{peter.tankov@ensae.fr}} \\ CREST -- ENSAE, Institut Polytechnique de Paris \and Laura Tinsi\\ EDF R\&D and CREST -- ENSAE, Institut Polytechnique de Paris}
\date{}
\begin{document}

\maketitle

\begin{abstract}

We consider a sequential decision making process, such as renewable energy trading or electrical production scheduling, whose outcome depends on the future realization of a random factor, such as a
meteorological variable. We assume that the decision maker disposes of a dynamically updated
probabilistic forecast (predictive distribution) of the random factor. We propose several stochastic models for the
evolution of the probabilistic forecast, and show how these models may
be calibrated from ensemble forecasts, commonly provided by weather centers. We then show how these
stochastic models can be used to determine optimal decision making
strategies depending on the forecast updates. Applications to wind
energy trading are given.
\end{abstract}

Key words: Probabilistic forecasting, ensemble forecasting, stochastic control, wind power trading

\section{Introduction}

Consider a sequential decision-making process, such as renewable energy trading or electrical production scheduling, whose outcome depends on the realization
of a random factor, such as a meteorological variable. It
is often the case, that at each point in time, the
decision maker disposes of an imperfect probabilistic forecast of the random factor
(such as, a confidence interval or a set of quantiles), and that this
forecast is periodically, or continuously, updated. The goal of the
decision maker is to optimally update her strategy according to the
available information, to maximize a specific gain functional.  
To
solve this problem in the framework of stochastic
control, one needs to describe the dynamics of the predictive
distribution with a stochastic model. Such a model determines the evolution of the predictive
distribution and the relationship of the forecasts to the
realization of the unknown
random factor; in other words, the model describes the  evolution of the forecast
error as new information becomes available. 

In the literature, stochastic decision update rules based on point forecasts have been proposed \cite{tan2018optimal, collet2017optimal, aid2016optimal,skajaa2015intraday}, however probabilistic forecasts contain more dynamic information than point forecasts, as the expected forecast uncertainty can also vary dynamically. Figure \ref{forecast_evolution.fig} shows the evolution of the probabilistic forecast of power production of a wind plant in France as function of time, for a fixed production time. It is clear that not only the average production varies with time, but also the width of the confidence interval changes: it does not always decrease with time and may not be fully correlated with the expected production level. This information reflects the varying forecast uncertainty and is not contained in the point forecast, but may be important for decision making. For example, a wind producer facing severe penalties in case of lack of production, or a network operator whose goal is to avoid shortages at all costs may need to purchase energy in the intraday market to hedge the risk when forecast uncertainty increases, even if the predicted average production remains the same. This paper develops models of the dynamic evolution of probabilistic forecasts, allowing to take into account precisely this type of uncertainty in dynamic decision making. 

\begin{figure} 
  \centerline{\includegraphics[width=0.7\textwidth]{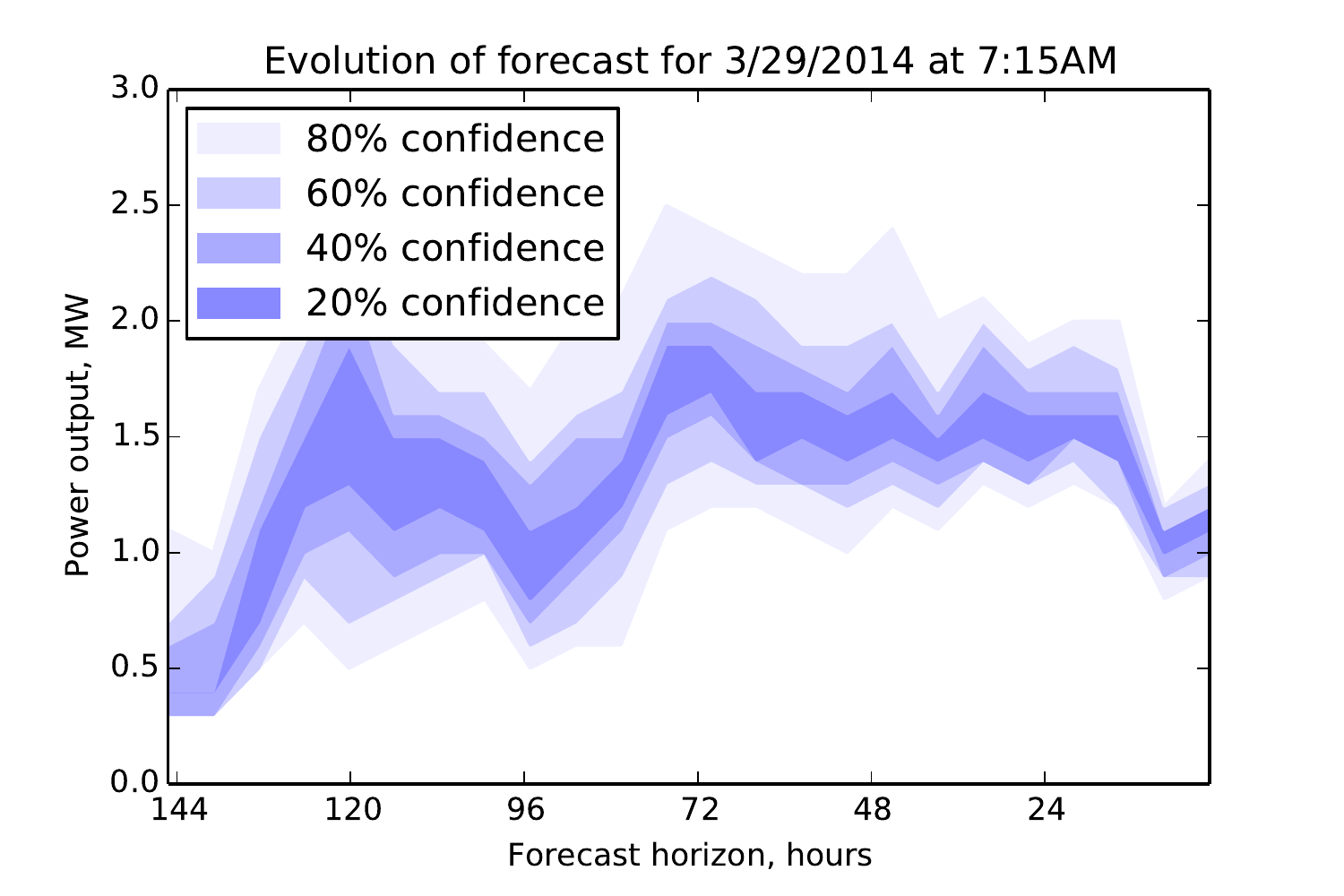}}
  \caption{Evolution of the probabilistic forecast of power production of a wind plant as function of time, for a fixed production time.}
  \label{forecast_evolution.fig} 
\end{figure}  

In mathematical terms, let $(\Omega,\mathcal F,\mathbb F, \mathbb P)$
be a filtered probability space and assume that $\mathbb F$ models the
filtration of the decision maker. Fix a time horizon $T$, and let $X$
be a real-valued $\mathcal F_T$-measurable random variable. We make a
standing assumption that $\mathcal F_0$ is a trivial $\sigma$-field. Let
$\mu_t$ denote the regular conditional distribution of $X$ given $\mathcal
F_t$. We call $\mu_t$ the probabilistic forecast of $X$ at time $t$. See \cite{gneiting2007probabilistic} for the description of the mathematical framework of probabilistic forecasting and methods of forecast evaluation. The goal of this paper is to
\begin{itemize}
\item[a.] Formulate the conditions that the dynamics of $\mu_t$ must satisfy and propose several tractable finite-dimensional models for this dynamics
 in the diffusion framework,
\item[b.] Show how these models may be calibrated  with real meteorologic data,  in the case where $\mu_t$ models the forecast of a meteorological variable. 
\item[c.] Provide an example of using the methodology to solve stochastic control problems arising in the context of wind energy trading. 
\end{itemize}

The
full predictive distribution is an infinite-dimensional object, but
the actual available information is always low-dimensional; for this
reason we aim to summarize the dynamics of the full
predictive distribution with a low number of factors, which are easy
to interpret and estimate from the data (such as the conditional mean and variance of
the predictive distribution). In addition, our objective of computing the
optimal strategies using the tools of stochastic control precludes the
use of high-dimensional specifications. 
More precisely, in this paper we consider parametric two-dimensional
specifications where the predictive distribution is a function of
two observable factors, say $m_t$
and $V_t$. Here $m_t$ represents the conditional expectation of
$X$ and $V_t$ some measure of the error, such as the conditional
variance. In our models, $m_t$ and $V_t$ have diffusion dynamics, and the predictive density $\mu_t$ corresponds to a distribution from some known class, such as Student t, normal inverse Gaussian, inverse Gaussian or log generalized hyperbolic, with parameters depending on $m_t$ and $V_t$.


In practice, the forecast information received by the decision maker from a forecast provider may come, for example, in the form of a confidence interval around a point forecast, or in the form a set of quantiles of the predictive distribution. A particularly important case is that of ensemble forecasts. An ensemble forecast in meteorology is a set of several point forecasts aiming together to give an indication of the range of possible future states of the atmosphere. Members of the ensemble are obtained by running the forecasting model with perturbed initial conditions and / or parameters. An ensemble forecast is usually obtained with deterministic means, and therefore does not represent the best approximation of the predictive distribution of meteorological variables. In particular, ensemble forecasts are often uncalibrated (biased) and underdispersed compared to realizations \cite{gneiting2005calibrated}. However, techniques for statistical post-processing of ensemble forecast with the aim to improve calibration and sharpness have been developed in the literature. Two such techniques are ensemble model output statistics (EMOS) \cite{gneiting2005calibrated,thorarinsdottir2010probabilistic} and Bayesian model averaging (BMA) \cite{wilks2002smoothing,raftery2005using}. In \cite{gneiting2005calibrated}, the authors approximate the predictive density with a Gaussian distribution, whose parameters depend on the ensemble forecasts and are chosen to optimize calibration and sharpness of the resulting probabilistic forecast. In  \cite{wilks2002smoothing,raftery2005using} the predictive density is reprensented by a mixture of normal distributions, whose weights are computed from the ensemble members. To account for positive random variables such as wind speed, EMOS with log-normal distributions has been used in \cite{baran2015log} and BMA with  truncated normal components has been employed in \cite{baran2014probabilistic}. Other approaches to statistical post-processing of wind speed forecasts involve generalized extreme value distribution \cite{lerch2013comparison} and weighted mixtures of log-normal and truncated normal distributions \cite{baran2016mixture}.

In these papers, a single forecast horizon is fixed, and the calibration procedure uses a series of ensemble forecasts, obtained at different days of the training period for the fixed forecast horizon. At any given time, the calibrated method allows to compute the probabilistic forecast for this fixed horizon from the ensemble forecast, but no information about the evolution of the probabilistic forecast is available.

Our approach to calibrate the models presented in this paper is inspired by EMOS and also based on ensemble forecasts. However, we use more general predictive densities, potentially allowing for better calibration. More importantly, we do not fix a single forecast horizon, but model the dynamics of the predictive distribution for a given quantity at a given date, as time goes on and forecast horizon decreases. 
As a result, our calibrated model provides two types of information. First, as in statistical postprocessing methods, a predictive density in tractable form can be computed from an ensemble forecast. Secondly, the dynamics of this predictable distribution is given, in the form of a two-dimensional stochastic differential equation characterizing the evolution of the pair $(m_t,V_t)$, the conditional mean of the predictive distribution and a measure of the error. This dynamics can be exploited in the decision making process, to make strategy updates based not only on the conditional mean of the variable of interest, but also on the evolution of our knowledge of the uncertainty around the mean.


Stochastic differential equations (SDE) have been used to model the dynamics of probabilistic forecasts by several authors, see e.g., \cite{iversen2016short,bensoussan2016cox} in the context of wind speed, or \cite{badosa2017day} in the context of solar energy forecasting. In these approaches, the forecasted quantity (e.g., the wind speed) is modeled directly by a stochastic differential equation, from which the probabilistic forecasts at any horizons, as well as their dynamics, can be deduced. However, the predictive distributions are typically not in tractable form (e.g., in \cite{iversen2016short} they are approximated by Monte Carlo), and the dynamics of the forecasting error is hard-coded into the equation and cannot be calibrated independently from ensemble forecasts, in other words, the variance of the forecast is not stochastic. This makes it impossible to use information on forecast uncertainty in strategy updates. 

Our approach provides a dynamic SDE-based model for forecast dynamics, tractable predictive distribution and possibility of model calibration with ensemble forecasts, in a sense taking the best of both worlds to obtain a coherent and realistic model. Moreover, the results are exploited in a stochastic control problem to integrate the additional information provided by the probabilistic forecasts in the decision process.\\

Using probabilistic forecasts for decision making in wind energy trading and electricity scheduling has been studied e.g., in \cite{pinson2007trading,pinson2013wind,zugno2013trading}. These references suggest a static approach, where a probabilistic forecast of a quantity of interest is used to make the decision on e.g., the quantity of energy to sell in the day-ahead market. By contrast, our dynamic approach allows to continuously, or regularly, update the decision based on the evolution of the forecast and information about its uncertainty.

 We illustrate our methodological contribution with an application to a wind power trading problem. In this problem, a wind power producer, who disposes of a dynamically updated probabilistic forecast of the wind speed, takes positions in the intraday electricity market to maximize the utility of terminal wealth. This setting gives rise to a three-dimensional stochastic control problem, which is solved using the dynamic programming principle. For the numerical solution we use the Least Squares Monte Carlo method implemented in the open-source library StOpt (see \cite{gevret2018stochastic}) and based on the methods of Bouchard and Warin \cite{bouchard2012monte} and Belomestny et al. \cite{belomestny2010regression} generalizing the seminal approach of Longstaff and Schwartz \cite{longstaff2001valuing} and Tsitsiklis and Van Roy \cite{tsitsiklis1999optimal}. To assess the value of taking into account the dynamics of probabilistic forecasts, we compare the gains of an agent using our approach with the potential gains of another agent who uses only the point forecasts and show that our method leads to a 5\% revenue increase in the simulation examples.

The paper is structured as follows. In section \ref{model} we describe several parametric models for the dynamics of probabilistic forecasts. In section \ref{calibration}, we develop a procedure inspired by EMOS to calibrate the models of section \ref{model} for different lead times and show that  our models have good prediction results and that EMOS  increases accuracy of prediction compared with raw ensembles, as expressed with Continuous Ranked Probability Score. In section \ref{controlpb}, we present an application of our methodology to wind power trading.

\section{Modeling probabilistic forecasts}\label{model}
As mentioned in the introduction, given a flow of information described by the filtration $(\mathcal F_t)_{0\leq t\leq T}$, a probabilistic forecast of an $\mathcal F_T$-measurable random variable $X \in \mathbb R^d$ is the conditional distribution $\mu_t$ of $X$ given $\mathcal F_t$. A dynamic model for a probabilistic forecast is then a flow of probability measures $(\mu_t)_{0\leq t\leq T}$, which can be identified with a flow of conditional distributions of some $\mathcal F_T$-measurable random variable. This imposes strong constraints on the dynamics of $(\mu_t)$, in particular, all moments of $\mu_t$, when they exist, must be $(\mathcal F_t)$-martingales. 
A $d$-dimensional Markov specification of forecast dynamics is a Markov process $(X_t)_{0\leq t\leq T}\in \mathbb R^d$ such that, at every $t\in [0,T]$, $\mu_t = \mu(t,X_t)$, where $\mu:[0,T]\to \mathbb R^d \to \mathcal P(\mathbb R)$ is a deterministic mapping, where $\mathcal P(\mathbb R)$ is the set of probability measures on $\mathbb R$. 

In this section, we develop several two-dimensional Markov specifications for forecast dynamics, which correspond to well-known tractable predictive distributions. 

\subsection{Forecast of a real-valued quantity}
In this section we propose two tractable models for the dynamics of
probabilistic forecast of a real-valued quantity, such as the
temperature. The models are based on the time-changed Brownian
motion. In the first paragraph, the predictive distribution at all times is the
Student t distribution (with power law tails), and in the second
paragraph, the predictive distribution is the normal inverse Gaussian
distribution (with exponentially decaying tails). 
\paragraph{Student t predictive distribution}
Let $\rho$ be a positive deterministic function,
continuous on $(0,\infty)$, 
with $\int_0^t \rho^2 (t) dt = +\infty$ for all $t>0$ (this function is singular at zero), let $W$ and $W'$ be
independent standard Brownian motions, whose filtration will be
denoted by $\mathbb F$, let $b>0$ and consider the following pair of
stochastic differential equations, defined for $t\in [0,T)$:
\begin{align}
\frac{dV_t}{V_t} &= -\rho^2(T-t) dt + b \rho(T-t) dW_t\label{ln1}\\
 dm_t &=\sqrt{V_t} \rho(T-t)dW'_t,\label{ln2}
\end{align}
\begin{proposition}\label{student.prop}
The equation (\ref{ln1}--\ref{ln2}) admits a strong solution $(m,V)$ on $[0,T)$. The limit $m_T = \lim_{t\to T} m_t$ exists in the almost sure sense, and for every $t\in[0,T)$, the
conditional distribution of $m_T$ given $\mathcal F_t$ is the Student
t distribution with $2\nu$ degrees of freedom, where $\nu = 1+
2/b^2$:
\begin{align*}
\frac{d}{dx}\mathbb P[m_T\in dx|\mathcal F_t]&= \frac{d}{dx}\mathbb
P[m_T\in dx|m_t, V_t] \\ &= 
\frac{\Gamma(\nu + \frac{1}{2})}{\Gamma(\nu)}\frac{b}{2 \sqrt{\pi V_t
    }} \left\{1+\frac{(x-m_t)^2 b^2}{4 V_t}\right\}^{-\nu-\frac{1}{2}}.
\end{align*}
In addition,
$$
m_t = \mathbb E[m_T|\mathcal F_t] \quad \text{and}\quad V_t =
\mathrm{Var}[m_T|\mathcal F_t]. 
$$
\end{proposition}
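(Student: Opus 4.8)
The plan is to exploit the explicit linear structure of the $V$-equation together with a deterministic time change, reducing the whole problem to a perpetual exponential functional of Brownian motion whose law is classical. First I would solve \eqref{ln1} explicitly. Since it is a (time-inhomogeneous) linear SDE, Itô's formula gives
$$V_t = V_0\exp\Big(-\big(1+\tfrac{b^2}{2}\big)A_t + b\,\beta_{A_t}\Big),\qquad A_t:=\int_0^t\rho^2(T-s)\,ds,$$
where $\beta$ is the standard Brownian motion obtained from the continuous martingale $\int_0^\cdot\rho(T-s)\,dW_s$ via Dambis--Dubins--Schwarz (its bracket is exactly $A$). Continuity of $\rho$ on $(0,\infty)$ makes the coefficients locally bounded on $[0,T)$, so $V$ is a well-defined, strictly positive strong solution there, and then $m_t=m_0+\int_0^t\sqrt{V_s}\,\rho(T-s)\,dW'_s$ is a strong solution of \eqref{ln2}. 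The standing hypothesis $\int_0\rho^2=\infty$ means precisely that $A$ is a continuous strictly increasing bijection of $[0,T)$ onto $[0,\infty)$; passing to the clock $a=A_t$ turns the system into the autonomous pair $d\hat V_a=\hat V_a(-\,da+b\,d\hat W_a)$, $d\hat m_a=\sqrt{\hat V_a}\,d\hat W'_a$ on $[0,\infty)$, which I find cleaner to work with.

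Next I would establish the martingale statements. From $E[\exp(b\beta_{A_s})]=\exp(\tfrac{b^2}{2}A_s)$ one gets $E[V_s]=V_0e^{-A_s}$, hence, using $\rho^2(T-s)\,ds=dA_s$,
$$E\big[\langle m\rangle_T\big]=\int_0^T E[V_s]\,\rho^2(T-s)\,ds=V_0\int_0^\infty e^{-a}\,da=V_0<\infty.$$
Thus $m$ is an $L^2$-bounded martingale on $[0,T)$, so it converges a.s. and in $L^2$ to a limit $m_T$, and $m_t=E[m_T\mid\mathcal F_t]$ follows by uniform integrability. The same computation started at time $t$, using $E[V_s\mid\mathcal F_t]=V_te^{-(A_s-A_t)}$, gives $\mathrm{Var}[m_T\mid\mathcal F_t]=E[\langle m\rangle_T-\langle m\rangle_t\mid\mathcal F_t]=V_t\int_0^\infty e^{-u}\,du=V_t$, which is the last display of the proposition.

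The heart of the proof is the conditional law. Conditioning additionally on the whole path of $V$ (equivalently on $\mathcal F^W$), the increment $m_T-m_t$ is centered Gaussian with random variance $\Sigma:=\int_t^T V_s\,\rho^2(T-s)\,ds$, since it is a $W'$-integral independent of $W$. Using the multiplicative form of $V$ and the independence of the post-$t$ increments of $\beta$ from $\mathcal F_t$, I would write $\Sigma=V_t R$ with $R=\int_0^\infty\exp\big(b\tilde\beta_a-(1+\tfrac{b^2}{2})a\big)\,da$, where $\tilde\beta$ is a Brownian motion independent of $\mathcal F_t$. A linear time change $\tau=\tfrac{b^2}{4}a$ recasts this as $R=\tfrac{4}{b^2}\int_0^\infty e^{2(\hat W_\tau-\nu\tau)}\,d\tau$ with $\nu=1+2/b^2$, and Dufresne's identity for perpetual exponential functionals gives $\int_0^\infty e^{2(\hat W_\tau-\nu\tau)}\,d\tau\stackrel{d}{=}(2\gamma_\nu)^{-1}$ with $\gamma_\nu\sim\mathrm{Gamma}(\nu,1)$. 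Hence $\Sigma\stackrel{d}{=}2V_t/(b^2\gamma_\nu)$, independent of the conditioning Gaussian, so $m_T-m_t$ is a centered normal variance mixture with inverse-gamma mixing density, that is, a scaled Student t with $2\nu$ degrees of freedom. Integrating out $\gamma_\nu$ against the Gaussian kernel produces exactly the density in the statement, with scale $2\nu\lambda^2=4V_t/b^2$. Since this law depends on $\mathcal F_t$ only through the pair $(m_t,V_t)$ — a shift by $m_t$ and a scale set by $V_t$ — the claimed equality of the conditional distributions given $\mathcal F_t$ and given $(m_t,V_t)$ follows.

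The main obstacle is precisely this last identification: recognizing $\Sigma$ as a perpetual exponential functional and pinning down its law. Everything else is the linear-SDE solution and routine $L^2$ martingale bookkeeping. The delicate points are (i) invoking Dufresne's identity with the correct normalization, and (ii) carrying the constants through the two rescalings so that the degrees of freedom emerge as $2\nu=2+4/b^2$ and the scale as $4V_t/b^2$, matching the stated density.
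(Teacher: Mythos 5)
Your proof is correct and follows essentially the same route as the paper's: solve the $V$-equation as a time-changed geometric Brownian motion, recognize $\int_t^T V_s\,\rho^2(T-s)\,ds$ as a perpetual exponential functional equal in law to $2V_t/(b^2\gamma_\nu)$ via Dufresne's identity, and integrate the Gaussian kernel against the inverse-gamma mixing law to obtain the Student t density with $2\nu$ degrees of freedom. Your two refinements --- conditioning on the path of $V$ (instead of the paper's Dambis--Dubins--Schwarz representation of $m$) to justify conditional Gaussianity and independence of the mixing variable, and obtaining $m_t=\mathbb E[m_T\mid\mathcal F_t]$ and $V_t=\mathrm{Var}[m_T\mid\mathcal F_t]$ from $L^2$-bounded martingale convergence and the It\^o isometry rather than from Student t moment formulas --- are minor variations that, if anything, tighten the paper's argument.
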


\begin{remark}
  In this model, and in the other models of this section, the predictive distribution is parameterized by two (stochastic) variable parameters, $m_t$ and $V_t$ which typically determine the location and scale of the distribution and may change as time passes and the forecast horizon draws near, and one fixed parameter $b$ (typically, the shape parameter), which remains constant throughout the lifetime of the forecast for a fixed date. In addition, the deterministic time-varying parameter $\rho$ does not affect the predictive distribution, but affects the dynamics of the variables  $m_t$ and $V_t$. 
\end{remark}  
\begin{remark}\label{timechange.rem}
The pair $(V_t,m_t)_{0\leq t< T}$ can alternatively be written as  $V_t = \widetilde V_{\theta_t}$ and $ m_t = \widetilde m_{\theta_t}$ with 
$$
\theta_t = \int_0^t \rho^2(T-s) ds,\quad 0\leq t< T,
$$
and
\begin{align*}
\widetilde V_t &= V_0 e^{-(1+\frac{b^2}{2})t + b \widetilde W_t}\\
    \widetilde m_t& = m_0 + \int_0^t \sqrt{\widetilde V_s} d\widetilde W'_s
\end{align*}
on $[0,\infty)$, with $(\widetilde W,\widetilde W')$ a standard $2$-dimensional Brownian motion. 
\end{remark}
\begin{proof}
Fixing $t<T$, as in the above remark, for $s\geq t$, we can write
$$
V_s = V_t\exp\left(-\int_t^s(1 + b^2/2) \rho^2(T-u)du + \int_t^s
  b \rho(T-u)
  dW_u \right) = V_t\overline V^{(-1-b^2/2,b)}_{\int_t^s \rho(T-u)^2 du}
$$
where $\overline V^{(\mu,b)}_t = e^{\mu t + b B'_t} $
for a different Brownian motion $B'$. In addition,
\begin{align*}
\int_t^T \rho^2(T-s) V_s dt &= V_t \int_t^T \rho^2(T-s) \overline V^{(-1-b^2/2,b)}_{\int_t^s
  \rho(T-u)^2 du} ds = V_t\int_0^{\infty} \overline V^{(-1-b^2/2,b)}_s ds \\&= \frac{4V_t}{b^2}\int_0^{\infty} \overline V^{-(4/b^2-2,2)}_s ds . 
\end{align*}

From \cite[Proposition 4.4.4]{dufresne1990distribution}, 
$$
\int_0^\infty\overline V_t^{(-4/b^2-2,2)} dt  \stackrel{d}{=}
(2\gamma_{\nu})^{-1},\quad \nu = 2/b^2 + 1,
$$
where $\gamma_\nu$ denotes a gamma random variable with parameter
$\nu$. 

Then, for $t<s<T$, 
$$
m_s =m_t + B\left(\int_t^s  V_r \rho^2(T-r)dr\right),
$$
for a different Brownian motion $B$. 
Therefore,
$$
m_T:=\lim_{s\to T}m_s=m_t + B\left(\int_t^T  V_r \rho^2(T-r)dr\right)=
B\left(\frac{2 V_t}{b^2 \gamma_\nu}\right)
$$
and finally
\begin{align*}
\frac{d}{dx}\mathbb P[m_T-m_t\in dx|\mathcal F_t] & =
\mathbb E\left[\frac{b\sqrt{\gamma_\nu}}{2\sqrt{\pi V_t
    }}e^{-\frac{x^2 b^2 \gamma_\nu}{4 V_t}}\right]\\
& = \frac{b}{2\Gamma(\nu) \sqrt{\pi V_t
    }} \int_0^\infty e^{-z-\frac{x^2 b^2 z}{4 V_t}} z^{\nu-\frac{1}{2}} dz \\
& = \frac{\Gamma(\nu + \frac{1}{2})}{\Gamma(\nu)}\frac{b}{2 \sqrt{\pi V_t
    }} \left\{1+\frac{x^2 b^2}{4 V_t}\right\}^{-\nu-\frac{1}{2}},
\end{align*}
which means that conditionnally on $\mathcal F_t$, $m_T-m_t$ follows the centered Student t distribution with $2\nu = 2+
4/b^2$
degrees of freedom. The expressions for mean and variance are obtained
from standard formulas for the Student t distribution. 
\end{proof}

Figure \ref{studentci} illustrates the dynamics of the predictive distribution in the model (\ref{ln1}--\ref{ln2}). We see that the confidence interval has a nontrivial behavior, it shrinks around the middle of the graph as the conditional variance goes down before increasing in size when the conditional variance goes up and shrinking to zero again at the very end. 
\begin{figure}[ht]
      \centering
      \caption{Confidence intervals for the Student predictive density, and the trajectories of the predictive distribution mean $m_t$ and variance $V_t$. }
      \includegraphics[width = 0.7\textwidth]{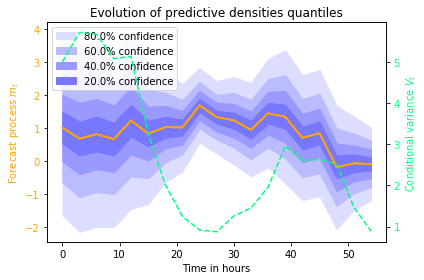}
      \label{studentci}
\end{figure}
\paragraph{Normal inverse Gaussian predictive distribution}
Using the notation of the
previous paragraph, consider the following pair of stochastic
differential equations. 
\begin{align}
dm_t &= \sqrt{{V}_t} \rho(T-t) dW_t\label{nig1}\\
d{V}_t &= - {V}_t \rho^2(T-t)dt + \sqrt{{V}_t} b
\rho(T-t) dW'_t.\label{nig2}
\end{align}
Here ${V}$ is a time-changed square-root process which hits zero
in finite time almost surely. We assume that this process remains at zero after the
first hitting time. A sample evolution of $(m,V)$ and the dynamics of the associated predictive distribution is shown in Figure \ref{nigci}. 
As in Remark \ref{timechange.rem}, we can express $m_t = \widetilde m_{\theta_t}$ and ${V}_t = \widetilde {V}_{\theta_t}$, where the processes $\widetilde m$ and $\widetilde V$ have time-homogeneous
dynamics (with different Brownian motions). 
\begin{align}
{d\widetilde m_t} &= \sqrt{\widetilde{V}_t}  dW_t\label{nigtc1}\\
d\widetilde{V}_t &= - \widetilde{V}_t dt + \sqrt{\widetilde{V}_t} b
dW'_t,\label{nigtc2}
\end{align}
\begin{proposition}\label{nig.prop}
The equation (\ref{pos1}--\ref{pos2}) admits a strong solution $(m,V)$ on $[0,T)$. The limit $m_T = \lim_{t\to T} m_t$ exists in the almost sure sense, and for every $t\in[0,T)$, the
conditional distribution of $ {m_T}$ given $\mathcal F_t$ is the symmetric
normal inverse Gaussian distribution on $\mathbb R$ with density 
\begin{equation}\label{nig.dist}
    p(x) = \frac{\frac{{V}_t}{b^2} K_1\left(\frac{1}{b} \sqrt{({V}_t/b)^2 +
      (x-m_t)^2}\right)}{\pi \sqrt{({V}_t/b)^2+(x-m_t)^2}}
  e^{{V}_t/b^2 }
\end{equation}
where $K$ is the modified Bessel
function of the third kind. Moreover, the conditional
mean and variance of $m_T$ are given by
$$
m_t = \mathbb E[m_T|\mathcal F_t] \quad \text{and}\quad \mathrm{Var}\,[m_T |\mathcal F_t] = {V}_t.  
$$

\end{proposition}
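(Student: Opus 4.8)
The plan is to follow the template of the proof of Proposition \ref{student.prop}, replacing the geometric Brownian motion by the time-changed square-root process and Dufresne's identity by an inverse Gaussian computation. First I would dispose of the analytic points. Reducing to the time-homogeneous system (\ref{nigtc1}--\ref{nigtc2}) as in Remark \ref{timechange.rem}, existence and pathwise uniqueness of a strong solution $(\widetilde m,\widetilde V)$ follow from the Yamada--Watanabe criterion, since the only non-Lipschitz coefficient is the $\sqrt{\widetilde V}$ term, which is H\"older-$1/2$; the process $\widetilde V$ is a square-root (CIR / squared-Bessel) diffusion of dimension $0$, hence absorbed at $0$, and I adopt the stated convention that it stays there. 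Setting $m_t=\widetilde m_{\theta_t}$, $V_t=\widetilde V_{\theta_t}$ with $\theta_t=\int_0^t\rho^2(T-s)\,ds$ then yields a strong solution of (\ref{nig1}--\ref{nig2}) on $[0,T)$.

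Next I would set up the time change of the martingale $m$. Since $W$ and $W'$ are independent, $m_T-m_t$ is, conditionally on the path of $V$, a centered Gaussian (a Wiener integral) with variance
\[ A_t=\int_t^T V_s\,\rho^2(T-s)\,ds=\int_{\theta_t}^{\infty}\widetilde V_u\,du, \]
the equality of the upper limits using that $\theta_T=\int_0^T\rho^2(T-s)\,ds=+\infty$ by the singularity assumption on $\rho$. It thus remains to identify the conditional law of $A_t$ given $\mathcal F_t$ and to perform the resulting variance mixture.

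The crux, and the step I expect to be the main obstacle, is to show that, conditionally on $\mathcal F_t$, $A_t$ is inverse Gaussian with mean $V_t$ and shape $V_t^2/b^2$. By the Markov property and time-homogeneity this reduces to computing $\phi(v)=\mathbb E_v[\exp(-\lambda\int_0^\infty\widetilde V_u\,du)]$ for the absorbed process started at $v$. The Feynman--Kac equation $\tfrac{b^2}{2}v\phi''-v\phi'-\lambda v\phi=0$ simplifies, after division by $v$, to the constant-coefficient ODE $\tfrac{b^2}{2}\phi''-\phi'-\lambda\phi=0$; selecting the branch that is bounded as $v\to\infty$ and satisfies $\phi(0)=1$ gives
\[ \phi(v)=\exp\!\Big(\tfrac{v}{b^2}\big(1-\sqrt{1+2b^2\lambda}\,\big)\Big), \]
which is exactly the Laplace transform of the inverse Gaussian law $\mathrm{IG}(v,v^2/b^2)$; taking $v=V_t$ gives the claim. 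Since this is a genuine probability law, $A_t<\infty$ almost surely, which also secures the a.s.\ existence of $m_T=\lim_{t\to T}m_t$. The delicate points here are the justification of the Feynman--Kac representation up to the absorption time and the selection of the correct solution branch via the boundary behaviour at $0$ and at $\infty$.

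Finally I would assemble the density. Conditionally on $\mathcal F_t$ and the path of $V$, $m_T-m_t\sim\mathcal N(0,A_t)$, so its conditional density is the normal variance mixture $\int_0^\infty (2\pi z)^{-1/2}e^{-x^2/(2z)}\,\mathrm{IG}(V_t,V_t^2/b^2)(dz)$, which is by definition the symmetric normal inverse Gaussian law. Carrying out the integral with the identity $\int_0^\infty z^{\nu-1}e^{-a/z-cz}\,dz=2(a/c)^{\nu/2}K_\nu(2\sqrt{ac})$ at $\nu=-1$ (using $K_{-1}=K_1$) produces exactly the density (\ref{nig.dist}), with shape $\alpha=1/b$ and scale $\delta=V_t/b$, after the shift $x\mapsto x-m_t$. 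The martingale property gives $m_t=\mathbb E[m_T\mid\mathcal F_t]$, and $\mathrm{Var}[m_T\mid\mathcal F_t]=\mathbb E[A_t\mid\mathcal F_t]$ equals the inverse Gaussian mean $V_t$, completing the argument.
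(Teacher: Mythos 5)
Your proposal is correct, and while it shares the paper's skeleton---pass to the time-homogeneous system (\ref{nigtc1}--\ref{nigtc2}), reduce the conditional law of $m_T-m_t$ to that of the integrated variance $A_t=\int_{\theta_t}^\infty \widetilde V_u\,du$, then identify a normal inverse Gaussian law---both key steps are executed by genuinely different means. First, where the paper invokes the known Laplace transform of the integrated square-root process \cite[Proposition 6.3.4.1]{jeanblanc2009mathematical} and lets $s\to\infty$, you derive $\mathbb E_v\bigl[\exp\bigl(-\lambda\int_0^\infty \widetilde V_u\,du\bigr)\bigr]$ from scratch via Feynman--Kac, using the observation that dividing the ODE by $v$ leaves a constant-coefficient equation; the two formulas agree because $2u/(1+\sqrt{1+2ub^2})=(\sqrt{1+2ub^2}-1)/b^2$. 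Second, where the paper computes the conditional characteristic function of $m_T-m_t$ (via the Dambis--Dubins--Schwarz representation $m_T=m_t+\widetilde W_{A_t}$, equivalent to your conditional-Gaussianity argument given the independence of $W$ and $W'$) and matches it against the NIG characteristic function of \cite{barndorff1997processes}, you recognize the law of $A_t$ as inverse Gaussian $\mathrm{IG}(V_t,V_t^2/b^2)$ and carry out the normal variance-mixture integral with the Bessel identity to produce the density (\ref{nig.dist}) directly. Your route is more self-contained (it also proves, rather than cites, strong existence via Yamada--Watanabe) and makes the ``inverse Gaussian'' structure of the limit law transparent, delivering the density in the exact form stated; the paper's route is shorter given its references and yields the moments immediately from the characteristic function. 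The point you flag as delicate is genuine but standard: branch selection and the Feynman--Kac representation are rigorously justified by noting that $\exp\bigl(r_-\widetilde V_s-\lambda\int_0^s\widetilde V_u\,du\bigr)$, with $r_-=(1-\sqrt{1+2b^2\lambda})/b^2<0$ the negative root, is a bounded local martingale, hence a true martingale, and letting $s\to\infty$ (using absorption of $\widetilde V$ at $0$ in finite time) gives your formula for $\phi(v)$.
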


\begin{proof}
For the existence of the strong solution to \eqref{nigtc1}--\eqref{nigtc2}, see \cite[Section 6.3.1]{jeanblanc2009mathematical}
From Remark \ref{timechange.rem} it follows that  for $s\geq \theta_t$, 
$$
\tilde m_s -m_t =\int_{\theta_t}^s \sqrt{\widetilde{V}_u}
  dW_u = \widetilde W_{\int_{\theta_t}^s \widetilde {V}_u du}
$$
for a different Brownian motion $\widetilde W$. 
In particular 
$$
m_T = \widetilde m_\infty = m_t + 
\widetilde W_{\int_{\theta_t}^\infty \widetilde {V}_s ds}.
$$ 
$\tilde {V}$ is a square root process with zero long-term
mean. The Laplace transform of the integrated square root process is
known \cite[Proposition 6.3.4.1]{jeanblanc2009mathematical}:
$$
\mathbb E\left[\exp\left(-u\int_{\theta_t}^s \widetilde {V}_u
    du\right)\Big|\widetilde {V}_{\theta_t}\right]
= \exp\left(-\frac{2 \widetilde {V}_{\theta_t} u}{1 + \gamma \coth
    \frac{\gamma (s-\theta_t)}{2}}\right),
$$
where $\gamma = \sqrt{1 + 2u b^2}$. Integrating up to
infinity, we then find:
$$
\mathbb E\left[\exp\left(-u\int_{\theta_t}^\infty \widetilde {V}_s
    ds\right)\Big| \widetilde {V}_{\theta_t}\right]
= \exp\left(-\frac{2 \widetilde {V}_{\theta_t} u}{1 + \sqrt{1 + 2u b^2} }\right).
$$

This allows us to compute the Fourier transform of the conditional
distribution of $m_T$: 
\begin{align*}
\mathbb E[e^{iu ({m_T}-{m_t})} |\mathcal F_t] &= \mathbb E\left[\exp\left(iu
    \widetilde W_{\int_{\theta_t}^\infty \widetilde {V}_s ds}\right)|\widetilde {V}_{\theta_t}\right]\\
& = \mathbb E\left[\exp\left(-\frac{u^2}{2}\int_{\theta_t}^\infty \widetilde
    {V}_s ds\right)\Big| \widetilde {V}_{\theta_t}\right]\\
& = \exp\left(-\frac{\widetilde{V}_{\theta_t} u^2}{1 +
    \sqrt{1 + u^2b^2} }\right)\\
& = \exp\left(-\frac{{V}_{t} (
    \sqrt{1 + u^2b^2} -1)}{ b^2}\right)
\end{align*}
The characteristic function of the normal inverse Gaussian law with
parameters $\mu$, $\alpha$, $\beta$, $\delta$
\cite{barndorff1997processes} is given by
$$
e^{i\mu u + \delta(\sqrt{\alpha^2 - \beta^2} - \sqrt{\alpha^2 - (\beta+iu)^2})}.
$$
Hence, $m_T-m_t$ conditionnally on $\mathcal F_t$ follows the normal inverse Gaussian law
with parameters
$$
\mu = 0, \quad \delta = \frac{{V}_t}{b},\quad \beta =0,\quad \alpha  = \frac{1}{b}.
$$ 
The expressions of the conditional moments may be easily obtained from
the characteristic function. 
\end{proof}

\begin{figure}[ht]
      \centering
      \caption{Confidence intervals for the NIG predictive density, and the trajectories of the predictive distribution mean $m_t$ and variance $V_t$. }
      \includegraphics[width = 0.67\textwidth]{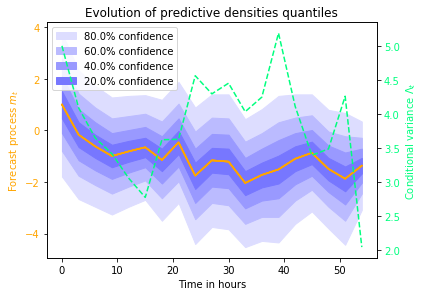}
      \label{nigci}
\end{figure}

\subsection{Forecast of a positive quantity}\label{sec:pos}
In this section we propose two models for the probabilistic forecast of a
positive quantity such as the wind speed. The models are obtained from
the ones of the previous section, replacing the Brownian motion with
the martingale geometric Brownian motion. 
\paragraph{Log-generalized hyperbolic predictive distribution}
Using the notation of the preceding section, consider the following pair of
stochastic differential equations.
\begin{align}
\frac{dV_t}{V_t} &= -\rho^2(T-t) dt + b \rho(T-t) dW_t\label{gh1}\\
 \frac{dm_t}{m_t} &=\sqrt{V_t} \rho(T-t)dW'_t.\label{gh2}
\end{align}
As in Remark \ref{timechange.rem}, we can write $m_t = \widetilde m_{\theta_t}$ and ${V}_t = \widetilde {V}_{\theta_t}$, where the processes $\widetilde m$ and $\widetilde V$ have time-homogeneous
dynamics (with different Brownian motions). 
\begin{align*}
\frac{d\widetilde m_t}{\widetilde m_t} &= \sqrt{\widetilde{V}_t}  dW_t\\
\frac{d\widetilde{V}_t}{\widetilde V_t} &= - dt +  b
dW'_t.
\end{align*}

\begin{proposition}
Let $(m,V)$ be a solution of (\ref{gh1}--\ref{gh2}). Then the
conditional distribution of $\log m_T$ given $\mathcal F_t$ is the
generalized hyperbolic distribution with density 
$$
p(x) = \frac{b e^{\frac{x-\mu}{2}}}{\Gamma(\nu) \sqrt{\pi V_t
    }}  \left(\frac{V_t}{b \sqrt{4V_t+ (x-\mu)^2 b^2}}\right)^{\nu + \frac{1}{2}} K_{\nu + \frac{1}{2}}\left(\sqrt{\frac{V_t}{b^2}+ \frac{(x-\mu)^2}{4}}\right)
$$
with $\mu = \log m_t$ and $\nu = 1+\frac{2}{b^2}$. In addition,
$$
\mathbb E[m_T|\mathcal F_t] = m_t.
$$
\end{proposition}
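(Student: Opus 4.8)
The plan is to reduce everything to the computation already carried out for the Student t case in Proposition \ref{student.prop}, the only genuinely new ingredient being the Itô drift produced by passing to $\log m$. First I would apply Itô's formula to \eqref{gh2}: since $d\langle m\rangle_t = m_t^2 V_t \rho^2(T-t)\,dt$, one gets
\begin{equation*}
d\log m_t = -\tfrac12 V_t\rho^2(T-t)\,dt + \sqrt{V_t}\,\rho(T-t)\,dW'_t .
\end{equation*}
Existence of a strong solution and of the a.s. limit $m_T$ follow exactly as in Remark \ref{timechange.rem} and Proposition \ref{student.prop}: after the time change $\theta_t=\int_0^t\rho^2(T-s)\,ds$ the pair becomes $(\widetilde m,\widetilde V)$ with $\widetilde V$ a geometric Brownian motion, while $\log\widetilde m$ is a continuous local martingale plus a finite-variation drift whose total quadratic variation $\int_t^T V_s\rho^2(T-s)\,ds$ is a.s. finite; hence $\log m_t$, and therefore $m_t$, converges a.s. as $t\to T$.

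Next I would condition on the path of $V$ (equivalently, on the Brownian motion $W$ driving \eqref{gh1}, which is independent of $W'$). Writing $A:=\int_t^T V_s\rho^2(T-s)\,ds$, the drift integral equals $-\tfrac12 A$ and the martingale part $\int_t^T\sqrt{V_s}\,\rho(T-s)\,dW'_s$ is, conditionally on $\mathcal F_t$ and on the path of $V$, a centered Gaussian with variance $A$. Hence, conditionally on $\mathcal F_t$,
\begin{equation*}
\log m_T-\log m_t \;\big|\; A \;\sim\; \mathcal N\!\left(-\tfrac{A}{2},\,A\right),
\end{equation*}
that is, $\log m_T$ is a normal mean--variance mixture with slope $\beta=-\tfrac12$. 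Crucially, $A$ is exactly the random time appearing in Proposition \ref{student.prop}, so the Dufresne identity gives $A \stackrel{d}{=} \frac{2V_t}{b^2\gamma_\nu}$ with $\gamma_\nu$ a Gamma$(\nu)$ variable independent of $\mathcal F_t$ and $\nu=1+2/b^2$; i.e. $A$ is a scaled inverse gamma.

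It then remains to integrate the Gaussian kernel against the law of $A$. Setting $y=x-\mu$ with $\mu=\log m_t$ and expanding $(y+A/2)^2/(2A)$, the $y$-linear term factors out as $e^{-y/2}$ and I am left with
\begin{equation*}
p(x)=\frac{e^{-y/2}}{\sqrt{2\pi}}\,\frac{c^{\nu}}{\Gamma(\nu)}\int_0^\infty a^{-\nu-\frac32}\exp\!\left(-\frac{1}{a}\Big(\tfrac{y^2}{2}+c\Big)-\frac{a}{8}\right)da,\qquad c=\frac{2V_t}{b^2}.
\end{equation*}
This is a generalized-inverse-Gaussian integral, which I would evaluate with the standard representation $\int_0^\infty a^{\lambda-1}e^{-(\chi/a+\psi a)/2}\,da = 2(\chi/\psi)^{\lambda/2}K_\lambda(\sqrt{\chi\psi})$, here with $\lambda=-\nu-\tfrac12$, $\chi=y^2+2c$, $\psi=\tfrac14$ and $K_{-\lambda}=K_\lambda$; this produces the factor $(y^2+2c)^{-(\nu+1/2)/2}$ and the term $K_{\nu+\frac12}(\tfrac12\sqrt{y^2+2c})$ which, after substituting $2c=4V_t/b^2$ and collecting constants, is precisely the claimed generalized hyperbolic density. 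Finally, $\mathbb E[m_T\mid\mathcal F_t]=m_t$ is immediate: conditionally on $A$, $\mathbb E[e^{\log m_T-\log m_t}\mid A]=e^{-A/2+A/2}=1$, so averaging over $A$ gives the identity (equivalently, $m$ is a true martingale).

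The main obstacle is the mixture integral and the bookkeeping that turns the scaled inverse-gamma mixture into the Bessel-$K$ form with the correct constants; matching the normalizing constant $(V_t/b^2)^\nu/\big(\Gamma(\nu)\sqrt\pi\big)$ requires care. One sanity check worth flagging is the sign of the exponential prefactor: the computation forces $e^{-(x-\mu)/2}$ (slope $\beta=-\tfrac12$), which is exactly what makes $\mathbb E[m_T\mid\mathcal F_t]=m_t$ hold, whereas the opposite sign would give $\mathbb E[m_T\mid\mathcal F_t]=m_t\,\mathbb E[e^{A}]\neq m_t$.
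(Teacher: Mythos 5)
Your proof is correct and follows essentially the same route as the paper's own argument: pass to $\log m$ by It\^o's formula, reduce via the time change to the computation already done for Proposition \ref{student.prop}, invoke the Dufresne identity to identify $A=\int_t^T V_s\rho^2(T-s)\,ds$ as a scaled inverse gamma variable, and evaluate the resulting normal mean--variance mixture by the generalized-inverse-Gaussian integral. Your bookkeeping is also right: with $\lambda=-\nu-\tfrac12$, $\chi=y^2+2c$, $\psi=\tfrac14$ and $2c=4V_t/b^2$, your constants reproduce exactly the normalization $\frac{b}{\Gamma(\nu)\sqrt{\pi V_t}}\bigl(\frac{V_t}{b\sqrt{4V_t+y^2b^2}}\bigr)^{\nu+\frac12}$ of the displayed density.

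The one point where you and the paper part ways is the sign of the exponential tilt, and there you are right while the stated proposition (and the paper's proof) is not. The paper's proof starts from the correct representation $\log(m_T/m_t)=W\bigl(\tfrac{2V_t}{b^2\gamma_\nu}\bigr)-\tfrac{V_t}{b^2\gamma_\nu}$, i.e.\ conditional mean $-A/2$, but in the very next display writes the conditional Gaussian density with mean $+\tfrac{V_t}{b^2\gamma_\nu}$ (the factor $(x-\tfrac{V_t}{b^2\gamma_\nu})^2$ should be $(x+\tfrac{V_t}{b^2\gamma_\nu})^2$); this single slip propagates into the factor $e^{+\frac{x-\mu}{2}}$ in the stated density. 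Carrying the drift $-A/2$ through, as you do, forces $e^{-\frac{x-\mu}{2}}$, i.e.\ $\beta=-\tfrac12$. Your sanity check is exactly the right diagnosis: writing $p(x)=e^{\pm y/2}g(|y|)$ with $y=x-\mu$ and $g$ even, the symmetry $y\mapsto -y$ gives $\int e^{y}e^{-y/2}g(|y|)\,dy=\int e^{-y/2}g(|y|)\,dy=1$, so only the minus sign is compatible with $\mathbb E[m_T|\mathcal F_t]=m_t$; with the plus sign one gets $\int e^{3y/2}g(|y|)\,dy$, which here is infinite because the corresponding (heavy, polynomial) tail of the skew Student t density then sits on the right. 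The minus sign is also the only one consistent with the paper's own remark that $m_T$ admits a first but not a second moment. So your proposal is not merely correct; it corrects a sign error in the proposition and in the paper's proof.
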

\begin{remark}
This distribution is a particular case of the generalized hyperbolic
distribution, known as generalized hyperbolic skew Student t
distribution \cite{aas2006generalized}. With this distribution, $m_T$
does not admit a second moment. 
\end{remark}
\begin{proof}
With the notation of the proof of Proposition \ref{student.prop}, we
now get
$$
\log\frac{m_T}{m_t} = W\left(\frac{2V_t}{b^2 \gamma_\nu}\right) -
\frac{V_t}{b^2 \gamma_\nu},
$$
and therefore
\begin{align*}
\frac{d}{dx}\mathbb P[\log(m_T/m_t)\in dx|\mathcal F_t] &= \frac{d}{dx}\mathbb
E[\mathbb P[\log(m_T/m_t)\in dx|\gamma_\mu,V_t]] \\ &=
\mathbb E\left[\frac{b\sqrt{\gamma_\mu}}{2\sqrt{\pi V_t
    }}e^{-\frac{(x-\frac{V_t}{b^2 \gamma_\mu})^2 b^2 \gamma_\mu}{4 V_t}}\right]\\
& = \frac{b e^{\frac{x}{2}}}{2\Gamma(\mu) \sqrt{\pi V_t
    }} \int_0^\infty e^{-z\left(1+ \frac{x^2 b^2 }{4V_t}\right) -
    \frac{V_t}{4b^2 z}} z^{\mu-\frac{1}{2}} dz \\
& = \frac{b e^{\frac{x}{2}}}{\Gamma(\mu) \sqrt{\pi V_t
    }}  \left(\frac{V_t}{b \sqrt{4V_t+ x^2 b^2}}\right)^{\mu + \frac{1}{2}} K_{\mu + \frac{1}{2}}\left(\sqrt{\frac{V_t}{b^2}+ \frac{x^2}{4}}\right).
\end{align*}
\end{proof}
\begin{figure}[ht]
      \centering
      \caption{Confidence intervals for the log-generalized hyperbolic predictive density and the trajectories of the processes $m$ and ${V}$ }
      \includegraphics[width = 0.67\textwidth]{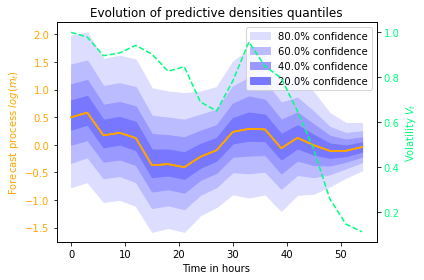}
      \label{ghci}
\end{figure}
\paragraph{Log-normal inverse Gaussian predictable distribution} 
Using the same notation as
above, consider the following pair of SDEs. 
\begin{align}
\frac{dm_t}{m_t} &= \sqrt{V_t} \rho(T-t) dW_t\label{pos1}\\
dV_t &= - V_t \rho^2(T-t)\left(1+\frac{b^2}{2}\right)dt + \sqrt{V_t} b
\rho(T-t) dW'_t.\label{pos2}
\end{align}
The equivalent time-changed representation takes the form
\begin{align}
\frac{d\widetilde m_t}{\widetilde m_t} &= \sqrt{\widetilde V_t}  dW_t\label{pos1tc}\\
d\widetilde V_t &= - \widetilde V_t \left(1+\frac{b^2}{2}\right)dt + \sqrt{\widetilde V_t} b dW'_t.\label{pos2tc}
\end{align}
The proof of the following proposition is very similar to that of
Proposition \ref{nig.prop} and will therefore be omitted. 

\begin{proposition}
Let $(m,V)$ be a solution of (\ref{pos1}--\ref{pos2}). Then the
conditional distribution of $\log {m_T}$ given $\mathcal F_t$ is the
normal inverse Gaussian distribution on $\mathbb R$ with density 
\begin{equation}\label{logNig.dist}
   p(x) = \frac{\alpha \delta K_1\left(\alpha\sqrt{\delta^2 +
      (x-\mu)^2}\right)}{\pi \sqrt{\delta^2+(x-\mu)^2}}
  e^{\delta\gamma + \beta(x-\mu)} 
\end{equation}

where the parameters are given by 
$$
\mu = \log m_t, \quad \delta = \frac{V_t}{b},\quad \beta =
-\frac{1}{2},\quad \alpha  =  \sqrt{ (b^{-1}+\frac{b}{2})^2  + \frac{1}{4}},
$$
$\gamma = b^{-1} + \frac{b}{2}$, and $K$ is the modified Bessel
function of the third kind. Moreover, the first two conditional
moments of $m_T$ are given by
$$
m_t = \mathbb E[m_T|\mathcal F_t] \quad \text{and}\quad \mathbb
E[m_T^2 |\mathcal F_t] = m_t^2 e^{V_t}. 
$$

\end{proposition}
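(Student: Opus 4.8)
The plan is to follow the scheme of the proof of Proposition~\ref{nig.prop} almost verbatim, accounting for the two structural differences between \eqref{pos1tc}--\eqref{pos2tc} and \eqref{nigtc1}--\eqref{nigtc2}: the mean-reversion speed of $\widetilde V$ is now $1+\frac{b^2}{2}$ instead of $1$, and $\widetilde m$ has geometric rather than arithmetic dynamics. First I would establish the strong solution exactly as before: $\widetilde V$ is a square-root (CIR) process with zero long-term mean, whose strong existence is given by \cite[Section 6.3.1]{jeanblanc2009mathematical}; since the constant term in the drift vanishes the Feller condition fails, so $\widetilde V$ reaches zero and is absorbed there, whence $A:=\int_{\theta_t}^\infty \widetilde V_s\,ds<\infty$ almost surely. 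Using Remark~\ref{timechange.rem} and It\^o's formula applied to $\log\widetilde m$, the geometric dynamics produce an extra drift, giving
\[
\log m_T-\log m_t=\int_{\theta_t}^\infty\sqrt{\widetilde V_u}\,dW_u-\tfrac12\int_{\theta_t}^\infty \widetilde V_u\,du=\widetilde W_{A}-\tfrac12 A,
\]
where $\widetilde W$ is the Dambis--Dubins--Schwarz Brownian motion of the martingale part. Because $W\perp W'$, $\widetilde W$ is independent of the path of $\widetilde V$, hence of $A$.

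Conditioning on the $\widetilde V$-path and integrating out the Gaussian $\widetilde W_A\mid A\sim\mathcal N(0,A)$ yields
\[
\mathbb E\!\left[e^{iv(\log m_T-\log m_t)}\,\middle|\,\mathcal F_t\right]=\mathbb E\!\left[\exp\!\left(-\tfrac{v^2+iv}{2}\,A\right)\,\middle|\,\mathcal F_t\right],
\]
so the problem reduces to the Laplace transform of the integrated square-root process at $u=\frac{v^2+iv}{2}$. Evaluating the formula of \cite[Proposition 6.3.4.1]{jeanblanc2009mathematical} in the infinite-horizon limit (the analogue of the integrated formula used in Proposition~\ref{nig.prop}, now with speed $1+\frac{b^2}{2}$) gives
\[
\mathbb E\!\left[e^{-uA}\,\middle|\,\mathcal F_t\right]=\exp\!\left(-\frac{V_t}{b^2}\left(\sqrt{(1+\tfrac{b^2}{2})^2+2b^2u}-(1+\tfrac{b^2}{2})\right)\right).
\]
Substituting $u=\frac{v^2+iv}{2}$, factoring $b^2$ out of the square root and adding back the location $\mu=\log m_t$, the conditional characteristic function of $\log m_T$ becomes $\exp\big(i\mu v+\delta(\sqrt{\alpha^2-\beta^2}-\sqrt{\alpha^2-(\beta+iv)^2})\big)$ with $\delta=V_t/b$, $\beta=-\tfrac12$, $\alpha=\sqrt{(b^{-1}+b/2)^2+\tfrac14}$ and $\gamma=\sqrt{\alpha^2-\beta^2}=b^{-1}+b/2$; this is precisely the NIG characteristic function of \cite{barndorff1997processes}, and the density \eqref{logNig.dist} follows from the standard NIG density.

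The moments follow from the same representation $m_T=m_t\exp(\widetilde W_A-\tfrac12 A)$. Since $\mathbb E[\exp(\widetilde W_A-\tfrac12 A)\mid A]=1$ and $A$ then integrates out trivially, $m$ is a true positive martingale and $\mathbb E[m_T\mid\mathcal F_t]=m_t$; for the second moment, $\mathbb E[m_T^2\mid\mathcal F_t]=m_t^2\,\mathbb E[e^{A}\mid\mathcal F_t]$ is the same Laplace transform evaluated at $u=-1$, and the algebraic identity $(1+\tfrac{b^2}{2})^2-2b^2=(1-\tfrac{b^2}{2})^2$ collapses the exponent to $V_t$, giving $m_t^2 e^{V_t}$. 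I expect the only delicate points to be bookkeeping ones rather than conceptual: correctly carrying the extra $-\tfrac12 A$ drift (which is exactly what moves $\beta$ from $0$ to $-\tfrac12$), evaluating the integrated-CIR Laplace transform with the modified speed $1+\frac{b^2}{2}$, and choosing the correct branch of the square root at $u=-1$, where one must note $(1+\tfrac{b^2}{2})^2-2b^2\ge 0$ to guarantee finiteness of the second moment. Everything else is identical to Proposition~\ref{nig.prop}.
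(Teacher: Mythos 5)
Your proposal is correct and follows exactly the route the paper intends: the paper omits this proof, stating only that it is ``very similar to that of Proposition \ref{nig.prop},'' and your argument is precisely that adaptation --- time change, Dambis--Dubins--Schwarz, the integrated square-root-process Laplace transform with mean-reversion speed $1+\tfrac{b^2}{2}$, and identification of the NIG characteristic function, with the extra $-\tfrac12 A$ drift correctly accounting for the shift to $\beta=-\tfrac12$. The one caveat, which concerns the proposition's statement as much as your proof, is that your branch choice $\sqrt{(1-\tfrac{b^2}{2})^2}=1-\tfrac{b^2}{2}$ in the second-moment computation implicitly requires $b\le\sqrt{2}$ (for $b>\sqrt{2}$ the positive root gives $\mathbb{E}[m_T^2|\mathcal F_t]=m_t^2e^{2V_t/b^2}$ instead); this is harmless in practice since the calibrated values of $b$ in the paper are far below $\sqrt{2}$.
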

\begin{figure}[ht]
      \centering
      \caption{Confidence intervals for the log normal inverse Gaussian predictive density and the trajectories of the processes $m$ and $V$. }
      \includegraphics[width = 0.7\textwidth]{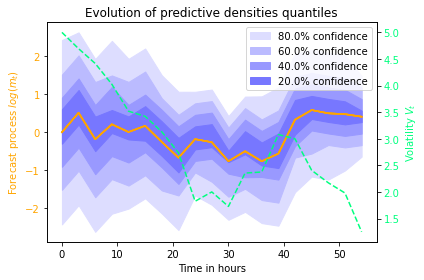}
      \label{lnigci}
\end{figure}

Note that for all processes presented in this section, the { process $V_t$}  does not necessarily decrease with time (see Figures \ref{studentci}-\ref{lnigci}). This reflects the fact that uncertainty over the quantity to forecast can vary over time and does not always decrease as we approach the realization date. {We also attract the reader's attention on the fact that while $V_t$  does represent the forecast uncertainty, it does not coincide with the conditional variance of the predictive density for positive quantities -- \textit{e.g.,} for $(m,V)$ solution of \eqref{pos1}--\eqref{pos2}, the variance is given by $\text{Var}[m_T|\mathcal{F}_t] = m^2_t(e^{V_t}-1)$.}

\section{Fitting forecast models to data}\label{calibration}

In this section, we detail the procedure for calibrating our models for forecast dynamics from historical ensemble forecasts and the corresponding realizations.
To illustrate the forecasting of a real-valued quantity, we shall use the normal inverse Gaussian model defined by the equations (\ref{nig1}--\ref{nig2}) and the predictive density \eqref{nig.dist}, and apply it to ensemble forecasts of temperature. 
To ilustrate the forecasting of a positive quantity, we shall use the model defined by the equations (\ref{pos1}--\ref{pos2}) and the predictive density \eqref{logNig.dist}, and apply it to ensemble forecasts of the wind speed.

 \subsection{Presentation of the dataset}\label{datapres}
 
 The data is composed of meteorological ensemble forecasts from $K=273$ different locations {around Paris, France, plotted on the map in Figure \ref{locations},} recorded over January 2015.
 
\begin{figure}[ht]
      \centering
      \caption{Locations of the meteorological records }
      \includegraphics[width =0.5\textwidth]{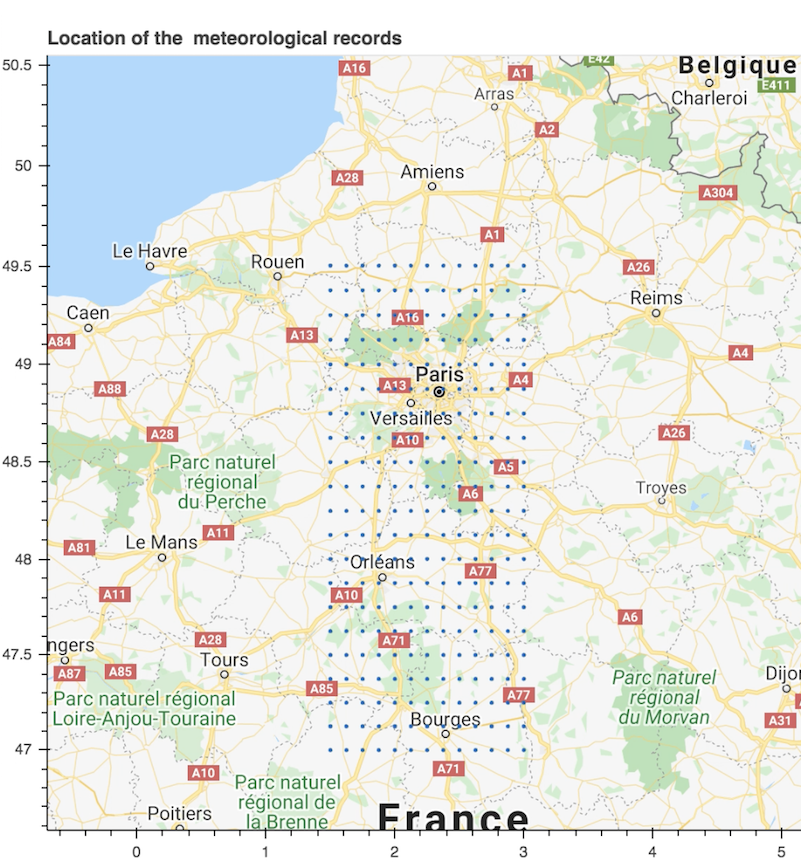}
      \label{locations}
\end{figure}

 A new forecast ensemble becomes available at 12PM (noon) and at 12AM (midnight) on each day. Each forecast ensemble consists of 50 members, and each member provides a prediction for all meteorological variables for lead times from 1h to 48h, with a step of {3h}. Since the forecasts are updated every 12 hours, in our study of forecasts dynamics, we use only the forecast horizons which are multiples of 12 hours, that is, $h\in \mathcal H = \{12,24,36,48\}$.

 As the locations are very close to each other we make the approximation that they form several ensemble forecasts of the same area.  { We will use all the ensemble forecasts to calibrate the model, making the approximation that, for each time horizon, the calibrated coefficient can be used to obtain the predictive densities in each of these locations.}

 Forecasts recorded on days from January 3 to January 21 constitute the training set and those recorded from January 22 to January 31 form the test set. 

 Note that the first two days of January are not used for the calibration because we do not dispose of the full forecast data for them. Our training set is thus composed of $T=38$ 12-hour periods.

  In this application we are interested in two variables: the temperature at 2 meter height  denoted by $\tau$ and the 10 meter wind speed denoted by $w$.  The wind speed is not directly available in the data and we compute it from the two components $w_x$ and $w_y$ through the usual formula $w= \sqrt{w_x^2+w_y^2}$ {for each member of the forecast ensemble and for each realization. }

  Since the maximum lead time for our forecast is 48 hours and new forecast becomes available every 12 hours, to study the dynamics of the forecast of a given realization recorded at 12 AM or 12 PM, we dispose of 4 data points with lead times 48h, 36h, 24h and 12h.  In Figure \ref{Fig1}, we plot four forecast ensembles for the wind speed, recorded for a specific location in our dataset at four consecutive forecast update times (Jan 1st 12PM, Jan 2nd 12AM, Jan 2nd 12 PM and Jan 3rd 12AM). The forecasts  for a fixed terminal time (Jan 3rd, 12 PM) are shown with the vertical bar in the four graphs. 
\begin{figure}[ht]
      \centering
      \caption{Forecast ensembles for the wind speed, recorded at four consecutive forecast update times (Jan 1st 12PM, Jan 2nd 12AM, Jan 2nd 12 PM and Jan 3rd 12AM). The forecasts  for a fixed terminal time (Jan 3rd, 12 PM) are shown with the vertical bar }
      \includegraphics[width = 0.45\textwidth]{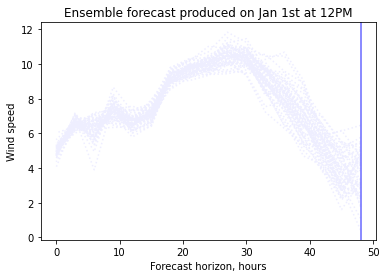}\label{Fig1}
      \includegraphics[width =0.45\textwidth]{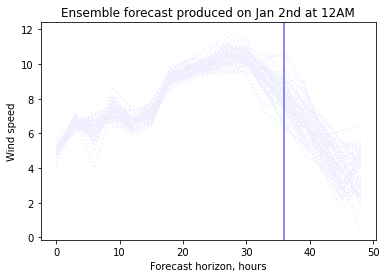}
    \includegraphics[width = 0.45\textwidth]{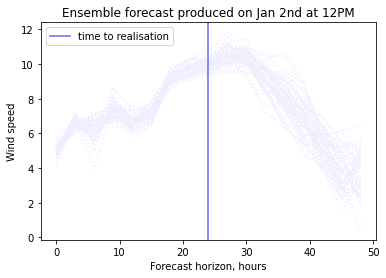}
      \includegraphics[width =0.45\textwidth]{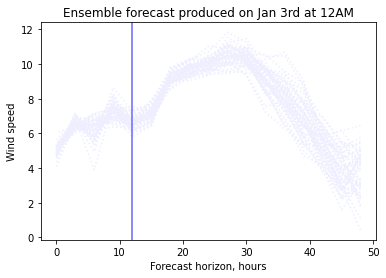}
\end{figure}

\subsection{Model calibration}\label{methodcali}

To calibrate our models for forecast dynamics from meteorological ensemble forecasts, we use an approach inspired by the EMOS methodology in \cite{gneiting2005calibrated}, to determine the conditional mean and variance of the predictive distribution from the ensemble forecasts. As explained in the introduction, the ensemble forecasts may be biased and underdispersed, so that the mean and variance of the predictive distribution are not necessarily equal to the mean and variance of the empirical distribution of the forecast members, although these quantities are certainly related to each other. 
  Let $x^m_{htk}$ denote the value of member $m$ of the ensemble forecast of a given meteorological quantity (wind speed or temperature), recorded at time $t$, at location $k$, for the forecast horizon $h$, and by $\tilde x_{tk}$ the corresponding realization. We assume that the mean of the predictive distribution, denoted by $m_{htk}$ is a linear function of the mean of ensemble members:
  \begin{align}m_{htk} =  m_{htk}(a^0_h,a^1_h) = a^0_h + \frac{a^1_h}{M}\sum_{m = 1}^M x^{m}_{htk} := a^0_h + a^1_h \bar x_{htk}.\label{emosmean}
 \end{align}   
  The coefficients $a^0_h$ and $a^1_h$ reflect the bias in the ensemble forecasts. In the case of unbiased forecasts we would have $a^0_h= 0$ and $a^1_h = 1$. 
  Similarly, the variance of the predictive distribution  depends on the spread of ensemble members, but the latter may not reflect the forecasting error entirely. Hence we assume that at each date $t$, each location $k$, and each lead time $h$, the variance of the predictive distribution is given by
  \begin{align}\sigma^2_{htk} = \sigma^2_{htk} (c_h,d_h) = c_h+d_h \frac{1}{M}\sum_{m = 1}^M (x^m_{htk}-\bar x_{htk})^2 := c_h+d_hV_{htk}.\label{emosvar}
  \end{align}
 The coefficients in the expression for the variance can be interpreted as follows: $c_h$ represents the part of the error that is not related to the spread of the ensemble members, whereas $d_h$ represents the part of the conditional variance explained by the ensemble spread.

 The full model specification for the temperature forecasts is thus given by equations (\ref{nig1}--\ref{nig.dist}) and (\ref{emosmean}--\ref{emosvar}), while the full specification for the wind forecasts is given by equations (\ref{pos1}--\ref{logNig.dist}) and  (\ref{emosmean}--\ref{emosvar}).

 The full model is calibrated in a three-step procedure as detailed below.
 \paragraph{Step 1} In the first step, we first calibrate, separately for each forecast horizon, the parameters $a^0_h$ and $a^1_h$ by linear regression:
 $$  (\hat a_h^0,\hat a_h^1) = \underset{a^0,a^1}{\arg\min} \sum_{t,k = 1}^{T,K} (a^0 + a^1 \bar x_{htk}-\tilde x_{tk})^2.$$
 Next, using the calibrated values $\hat a_h^0$ and $\hat a_h^1$, we calibrate $c_h$, $d_h$ and $b_h$ by maximum likelihood:
 $$
 (\hat c_h,\hat d_h,\hat b_h) = \underset{c,d,b}{\arg\max} \sum_{t,k = 1}^{T,K} \log p(\tilde x_{tk}, m_{htk}(\hat a_h^0,\hat a_h^1),\sigma^2_{htk}(c,d),b),
 $$
 where $p(x, m,\sigma^2,b)$ denotes the predictive density expressed in terms of the conditional mean $m$, the conditional variance $\sigma^2$ and the shape parameter $b$. 

For the {temperature}, the predictive density \eqref{nig.dist} writes,
$$
p^\tau(x,m,\sigma^2,b) =\frac{\frac{\sigma^2}{b^2} K_1\left(\frac{1}{b} \sqrt{(\sigma^2/b)^2 +
      (x-m)^2}\right)}{\pi \sqrt{(\sigma^2/b)^2+(x-m)^2}}
  e^{\sigma^2/b^2 },
  $$
  and for the wind speed, the predictive density \eqref{logNig.dist} writes,
  \begin{equation*}
   p^w(x,m,\sigma^2,b) = \frac{\alpha V K_1\left(\frac{\alpha}{b}\sqrt{V^2 +
      b^2(x-\log m)^2}\right)}{\pi \sqrt{V^2+b^2(x-\log m)^2}}
  e^{V(b^{-2} + \frac{1}{2}) - \frac{1}{2}(x-\log m)} 
\end{equation*}
where 
$$
\alpha  =  \sqrt{ (b^{-1}+\frac{b}{2})^2  + \frac{1}{4}}\quad \text{and}\quad 
V = \log\left(\frac{\sigma^2}{m^2}+1\right). 
$$

Note that in this step, the shape parameter $b_h$ is calibrated independently for each forecast horizon. A common value for all horizons will be fixed in the next step.
The choice of the maximum likelihood procedure in this first step was motivated by the availability of predictive densities in explicit form. An alternative would be to use the Continuous Ranked Probability Score (CRPS) as in \cite{gneiting2005calibrated}, but for our models the CRPS is only available through heavy numerical computation, making the approach of \cite{gneiting2005calibrated} difficult to implement. We also tested direct maximum likelihood estimation of the four parameters $a^0_h,a^1_h,c_h,d_h$ and $b_h$, but the presented approach where linear regression is used for $a^0_h,a^1_h$ leads to better results.

\paragraph{Step 2}
In the previous step, the shape parameter $b$ was calibrated separately for each lead time. However, in our model, this parameter does not depend on the forecast lead time. Thus, once the parameters $a^0_t, a^1_t, c_t, d_t$ have been estimated in the first step, we perform again an estimation of the parameter $b$ by maximizing the likelihood including all forecast horizons:
$$\hat b = \underset{b}{\text{argmax}} \sum_{h \in \mathcal H}\sum_{t,k = 1}^{T,K} \log p(\tilde x_{tk}, m_{htk}(\hat a_h^0,\hat a_h^1),\sigma^2_{htk}(\hat c_h,\hat d_h),b).$$
This formulation applies for both the temperature and the wind speed calibration. This procedure does not impact the goodness of fit in terms of first and second moment since the parameters of the mean and the variance are fixed. However, the shape of the distribution may change a bit with no major impact.

\paragraph{Step 3}
Once we have estimated the 'static' properties of the model, that is, the parameters which appear in the predictive distribution ($b$ and $a_0$, $a_1$, $c$ and $d$ for each time horizon), we need to estimate the 'dynamic parameter', that is, the function $\rho$, which describes how the forecast varies dynamically. While the static parameters are estimated by comparing the forecasts with their respective realizations, $\rho$ can be estimated by comparing forecasts for the same quantity, obtained at different dates. 

For the temperature model, from equation (\ref{nig2}), we may write: 
$$\mathbb E\left[\frac{V_{t+s}}{V_{t}}|\mathcal F_t\right] = \exp{\left(-\int_t^{t+s} \rho(T-u)^2du\right)}. $$
Since the expectation in the right-hand side is deterministic, we can remove the conditioning and write:
\begin{align}\int_t^{t+s} \rho(T-u)^2du = -\log\left( \mathbb E\left[\frac{V_{t+s}}{V_{t}}\right]\right)\label{estimrho}
 \end{align}
Based on this identity, we suggest the following approach to calibrate $\rho$ based on a given discrete set of forecast horizons $0<h_1<\dots<h_H=T$, for which data are available. Here $h_H$ corresponds to the longest available horizon (when the simulation starts) and $h_1$ corresponds to the shortest available horizon (last available forecast for a given realization). 
Assume that $\rho$ is constant on the intervals $(h_i,h_{i+1})$, $i=1,\dots,H-1$ and denote the value of $\rho$ on the interval $(h_i,h_{i+1})$ by $\rho_i$. This assumption is without loss of generality since, in view of Equations (\ref{nigtc1}--\ref{nigtc2}) and (\ref{pos1tc}--\ref{pos2tc}), the law of $(m_{h_{i+1}},V_{h_{i+1}})$ conditional on $(m_{h_{i}},V_{h_{i}})$ depends on $\rho$ only through the integral $\int_{h_i}^{h_{i+1}} \rho(T-u)^2du$. 

In view of \eqref{estimrho}, we propose to estimate the values $\rho_1,\dots,\rho_{H-1}$ as follows:
  $$
 \hat \rho^2_i = -\frac{1}{ (h_{i+1}-h_i)}\log \frac{1}{TK  }\sum_{t,k=1}^{T,K}\frac{\sigma^2_{h_{i+1}tk}(\hat c_{h_{i+1}},\hat d_{h_{i+1}})}{\sigma^2_{h_{i}tk}(\hat c_{h_{i}},\hat d_{h_{i}})}.
 $$

 For the wind speed model, from equations (\ref{pos1}-\ref{pos2}), we may write: 
$$
\mathbb E\left[\log\frac{m_{t+h}}{m_t}|\mathcal F_t\right] = -\frac{1}{2}\int_t^{t+h}\rho^2(T-s) \mathbb E[V_s|\mathcal F_t] ds,
$$
and 
$$
\mathbb E[V_s|\mathcal F_t] = V_t \exp\left(-\left(1+\frac{b^2}{2}\right) \int_t^s \rho^2(T-u)du\right),
$$
so that
$$
\mathbb E\left[\log\frac{m_{t+h}}{m_t}|\mathcal F_t\right] = -\frac{V_t}{2+b^2} \left(1-\exp\left(-\left(1+\frac{b^2}{2}\right) \int_t^{t+h} \rho^2(T-u)du\right)\right).
$$
Dividing both sides by $V_t$, we can remove the conditioning and rewrite this expression as follows:
$$\int_t^{t+h} \rho^2(T-u)du = - \frac{2}{2+b^2} \log\left(1+(2+ b^2)\mathbb E\left[\frac{1}{V_t}\log\frac{m_{t+h}}{m_t}\right]\right).$$
As before, assume without loss of generality that $\rho$ is constant on every interval $(h_i,h_{i+1})$ for $i=1,\dots,H-1$, and denote its value on such interval by $\rho_i$. This suggests the following estimator for $\rho_i$:
$$
\hat \rho_i = -\frac{1}{h_{i+1}-h_i}\frac{2}{2+\hat b^2}\log\left(1+(2+ \hat b^2)\frac{1}{TK}\sum_{t,k=1}^{T,K}\frac{1}{\sigma^2_{h_i tk}(\hat c_{h_i},\hat d_{h_i})}\log\frac{m_{t+h}}{m_t}\right).
$$

\subsection{Numerical illustrations}
\label{stats}
In this section we apply the methodology presented in section \ref{methodcali} to ensemble forecasts for the wind speed and the temperature  described in \ref{datapres}. 

\paragraph{Estimated coefficients}
As explained in section \ref{methodcali}, we use an EMOS-inspired technique to improve the calibration of ensemble forecasts. To motivate this post-processing, we present in Figure \ref{Fig6} the Talagrand diagrams (rank histograms) for the ensemble forecasts of the log wind speed and the temperature, constructed using the test data. Talagrand diagram is a tool for checking the quality of calibration of ensemble forecasts and is a histogram of the ranks of observations within the corresponding forecast ensembles. 
 In other words, for a given forecast horizon $h$, we plot the histogram of $R(\tilde x_{tk},(x_{htk}^m)_{m = 1\dots M })_{t,k=1}^{T,K}$, where $R(\tilde x, (x^m)_{m = 1\dots M })$ is the normalized rank of the observation $\tilde x$ within the ensemble $(x^m)_{m = 1\dots M }$. For a perfectly calibrated ensemble forecast, the Talagrand diagram is within the confidence bounds of the uniform distribution. In the present case, histograms in Figure \ref{Fig6}, for the lead time $12h00$, and Figure \ref{Fig62}, for the lead time $24h00$, present a U-shaped profile, which is a clear indication of under-dispersion of our forecast ensembles. In addition, the asymmetric form of the diagram for the wind speed in Figure \ref{Fig6}, and for the temperature in Figure \ref{Fig62}, is an indication of the presence of a bias in the ensemble forecast. Histograms for lead times $36h00$ and $48h00$ are available in Appendix \ref{HistoAppend}. 
 

\begin{figure}[ht]
      \centering
      \caption{Talagrand diagrams for the wind speed and the temperature, lead time $12h00$}
      \includegraphics[width = 0.65\textwidth]{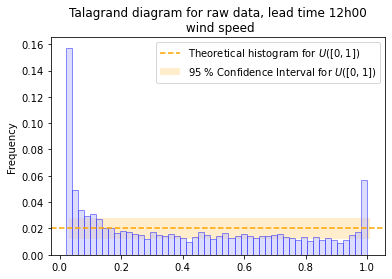}\label{Fig6}
      \includegraphics[width = 0.65\textwidth]{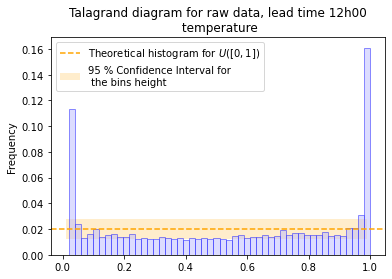}
    \label{talagrand1}
\end{figure}
\begin{figure}[ht]
      \centering
      \caption{Talagrand diagrams for the wind speed and the temperature, lead time $24h00$}
      \includegraphics[width = 0.65\textwidth]{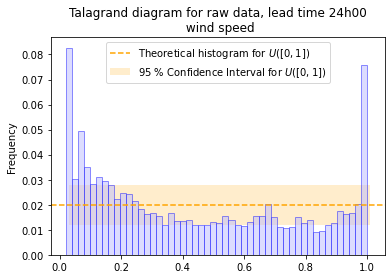}\label{Fig62}
      \includegraphics[width = 0.65\textwidth]{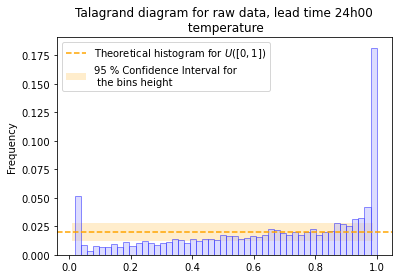}
          \label{talagrand2}

\end{figure}
In view of the Talagrand diagrams discussed above, we apply our post-processig approach to obtain an unbiased and well-calibrated probabilistic forecast. Tables \ref{windtab} and  \ref{temptab} show, respectively, the estimated  coefficients for the wind speed and the temperature during the training period.

\begin{table}[ht]
\begin{tabular}{|c|c|c|c|c|c|}
\hline
Lead time &      $a^0$ &       $a^1$ &         $c$ &         $d$ &     $b$ \\
\hline
12h  &  0.117 &  0.964 &  0.360 &  0.765 &  0.035\\
24h & -0.028 &  0.994 &  0.446 &  0.494 &  0.035 \\
36h  & -0.083&  1.006 &  0.951 &  0.160 &  0.035 \\
48h  & -0.573 &  1.044&  0.240 &  0.798 &  0.035 \\
\hline
\end{tabular} 
\hspace{0.5cm}
\begin{tabular}{|c|c|}
\hline
Lead time & $\rho(.)$\\ interval  &  \\ 
\hline
12h-24h &  0.171\\
24h-36h &  0.153 \\
36h-48h &  0.168\\
\hline
\end{tabular}
    \caption{Calibrated coefficients for the wind speed from the 1$^{st}$ to the 21$^{th}$ of January 2015 }
    \label{windtab}
\end{table}

\begin{table}[ht]
    \begin{tabular}{|c|c|c|c|c|c|}
\hline
Lead time  &       $a^0$ &       $a^1$ &         c &         d &     b \\
\hline
12h  &  0.217 &  0.952 &  0.312 &  1.722 &  0.719 \\
24h &  0.103&  1.006 &  0.416 &  0.916 &  0.719 \\
36h  &  0.136 &  1.019 &  0.675 &  0.467 &  0.719 \\
48h &  0.202 &  1.015 &  0.306 &  0.913 &  0.719 \\
\hline
\end{tabular}
\hspace{0.5cm}
\begin{tabular}{|c|c|}
\hline
Lead time & $\rho(.)$\\ interval  &  \\ 
\hline
12h-24h &  0.160 \\
24h-36h &  0.163\\
36h-48h &  0.180 \\
\hline
\end{tabular}
    \caption{Calibrated coefficients for the temperature from the 1$^{st}$ to the 21$^{th}$ of January 2015 }
    \label{temptab}
\end{table}


The spread coefficients $c$ and $d$  confirm that the raw ensemble forecasts are underdispersed: except for the $24h$ lead time wind speed forecasts, all provide a post process variance larger than the ensemble forecast spread. This is especially true for the $12h$ lead time temperature forecast which has an intercept of $c = 0.312$ and a slope of $d = 1.772$ which multiply almost by two the original variance. 

We now proceed to analyse the diffusion coefficients $b$ and $\rho$. The parameter $b$ is quite low for the log wind speed distribution. This suggests that the log-wind speed distribution is closer to the Gaussian one, than the temperature distribution for which the coefficient $b$ is higher. This indicates a heavier tailed model where extreme temperature values are more likely to happen than extreme wind spikes.

The values of the piecewise constant function $\rho(.)$ are very close to each other on the three time intervals considered for both the wind speed and the temperature. This feature will be useful when using this dynamics in the control problems in section \ref{controlpb}.

\paragraph{Goodness of fit}\label{GoF}

In this paragraph, we check the goodness of fit of the estimated predictive distribution using the test dataset and provide some illustrations. 
We first present the mean square error computed using the test data before and after the pre-processing that is :
$$\text{MSE}^{\text{raw}}(h) = \sum_{t,k=1}^{T,K} (\bar x_{htk}-\tilde x_{tk})^2,\quad \text{MSE}(h) = \sum_{t,k=1}^{T,K} (a^0_h+a^1_h \bar x_{htk}-\tilde x_{tk})^2.$$

Except for the wind speed at lead times $36h$ and $48h$, the correction of the bias brings the mean of the test data closer to the realization. Hence our estimation procedure doesn't overfit the training period and is robust when considering new data. However, we should mention that this is done over data for the same month and the same season. We may assume that the seasonality impacts the value of the coefficients and repeating the study over different months may provide additional insights.

\begin{table}[ht]
    \centering
\begin{tabular}{|c|c|c|c|}
\hline 
\multicolumn{3}{|c|}{Log wind speed} \\
\hline
 Lead time&  MSE &    MSE\\
 &  Raw ensemble & Model \\
\hline
12h        &      0.948 &  0.804  \\
24h        &      0.955 &  0.931  \\
36h        &      1.233 &  1.240 \\
48h        &      1.633 &  1.795  \\
\hline
\end{tabular}
\hspace{0.5cm}
\begin{tabular}{|c|c|c|}
\hline
\multicolumn{3}{|c|}{Temperature}\\
\hline
 Lead time&  MSE &    MSE \\
 &  Raw ensemble & Model \\
\hline
12h        &      0.656 &  0.599 \\
24h        &      0.851 &  0.778  \\
36h        &      0.973 &  0.900  \\
48h        &      1.539 &  1.424  \\
\hline
\end{tabular}
    \caption{MSE for the log wind speed and the temperature over the period (22/01-31/01)}
    \label{tab:temp}
\end{table}

To evaluate the calibration of the predictive distribution we use the  probability integral transform (PIT) histogram and to check both calibration and sharpness, we compute the continuous rank probability score (CRPS). 
The formal definition of the CRPS for  a given realisation $y$ and a predictive distribution with cumulative distribution function (CDF) $F$ is given by:
      $$\text{CRPS}(F,y)= \int_{\mathbb R} (F(x)-\mathbbm{1}_{\{y\leq x\}})^2dx $$
For the normal inverve Gaussian and the log normal inverse Gaussian distributions, it is not possible to compute the analytical expression of the CDF. However, we can use the Plancherel formula and obtain an expression relying on the characteristic function $\phi$ of the normal inverse Gaussian  predictive distribution: 
\begin{align*}
\int_{\mathbb R}(F(x)-\mathbf 1_{y\leq x})^2 dx &= \frac{1}{2\pi} \int_{\mathbb R} \frac{|\phi(u)-e^{iuy}|^2}{u^2} du \\ &= \frac{1}{2\pi} \int_{\mathbb R} \frac{\Big|e^{i \mu u + \delta \left(\gamma-\sqrt{\alpha^2-(\beta + i u)^2}\right)}-e^{iuy}\Big|^2}{u^2} du,
\end{align*}

Following this formulation, for a predictive distribution for the location $k$, the lead time $h$ and the date $t$, we denote the CDF $F_{tkh}$ and the realisation $y_{tkh}$. The parameters
 $\alpha, \beta, \gamma, \delta$ and $\mu$ for temperature are given by, 
 $$ \alpha = \frac{1}{b}, \; \beta = 0, \; \gamma = \frac{1}{b},\; \delta = \frac{V_{htk}}{b}, \; \mu = m_{htk},$$
 and for the wind speed:
 $$ \alpha = \sqrt{\left(\frac{1}{b}+\frac{b}{2}\right)^2 +\frac{1}{4}}, \; \beta = -\frac{1}{2}, \; \gamma =\frac{1}{b}+\frac{b}{2} ,\; \delta = \frac{V_{htk}}{b}, \; \mu = \log(m_{htk}).$$

We analyse the goodness of fit using the averaged CRPS over the period 22/01/15-31/01/15 for each lead time,  
$$\text{averageCRPS}_h(F, y) = \frac{1}{TK}\sum_{t,k= 1}^{T,K}\int_\mathbb R \left( F_{tkh} (x) -\mathbbm{1}_{\{y_{tkh}\leq x\}}\right)^2dx$$

We compare it to the averaged CRPS obtained with the raw ensemble forecasts, that is:

 $$ \text{averageCRPS}_h(\text{Ensemble forecasts}, y_{tkh}) = \frac{1}{TK}\sum_{t,k = 1}^{T,K} \int_\mathbb R\left(\hat F^M_{tkh}(x)-\mathbbm{1}_{\{y_{tkh}\leq x\}}\right)^2dx, $$
where $$
\hat F^M_{tkh}(x) = \frac{1}{M}\sum_{m=1}^M \mathbf 1_{\{x^m_{tkh} \leq x\}}
$$
and for each location, date and lead time, we may simplify the CRPS formula as follows: 
\begin{align*}
\text{CRPS}(\hat F^M_{tkh},y_{tkh}) &=\int_\mathbb R\left(\hat F^M_{tkh}(x)-\mathbbm{1}_{\{y_{tkh}\leq x\}}\right)^2dx \\ &= \frac{2}{M}\sum_{\ell=1}^M (x_{(\ell)}-y_{tkh})\left\{\mathbf 1_{\{x_{(\ell)}>y_{tkh}\}} - \frac{\ell-\frac{1}{2}}{M}\right\},
\end{align*}
where $x_{(\ell)},k=1\dots M$ is the order statistics of the sample $x^m_{tkh},m=1\dots M$.

\begin{table}[ht]
\caption{CPRS for the wind speed ($ \log m.s^{-1}$) and for the temperature (C$\degree$) over the test period 22/01/15-31/01/15}
    \centering
   \begin{tabular}{|c|c|c|c|c|} 
   \hline
\multicolumn{5}{|c|}{log wind speed}\\
  \hline
   {Lead time } &        12h &        24h &       36h &        48h \\ 
   \hline 
   EMOS &   0.039& 0.056 &   0.062 &  0.128\\
   \hline 
   Raw  &  0.080 & 0.095 &  0.100 &  0.102\\
   \hline
\end{tabular}
    \centering
    \begin{tabular}{|c|c|c|c|c|}
       \hline
\multicolumn{5}{|c|}{temperature}\\
    \hline
    Lead time &        12h &        24h &        36h &        48h \\
        \hline
EMOS & 0.372&  0.437 &  0.485 &  0.639\\
    \hline
Raw  &  0.450 &  0.481 &  0.538 &  0.752 \\
    \hline
\end{tabular}
    \label{tab:my_label}
\end{table}

Table \ref{tab:my_label} compares the CRPS computed using the test period for raw ensemble forecast and for probabilistic forecasts obtained using our post-processing method. We observe a significant improvement for both wind speed and temperature forecasts, for all forecast horizons except the 48-hour forecast horizon for the wind speed.  

As an independent illustration of the calibration of the post-processed forecasts we plot the probability integral transforms (PIT), which is the equivalent of Talagrand diagram in the context of probabilistic forecasts and consists in plotting the histogram of the predictive CDF evaluated at the realization point. If the predictive distribution is well calibrated then the histogram should be close to the uniform one.

\begin{figure}[ht]
      \centering
      \caption{PIT histogram for the wind speed and the temperature, lead time $12h00$ }
      \includegraphics[width = 0.48\textwidth]{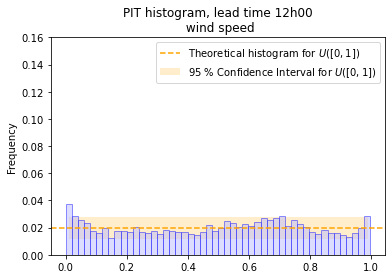}\label{Fig7}
      \includegraphics[width = 0.48\textwidth]{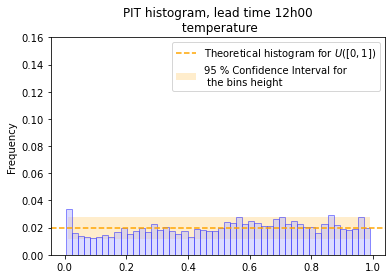}
\end{figure}

The PIT of post-processed forecasts are shown in  Figures \ref{Fig7} and \ref{Fig72} and may be compared to Talagrand diagrams in Figures \ref{talagrand1}--\ref{talagrand2}. Here again, the improvement is considerable, although some deviations from the uniform distribution can still be observed. They may be explained by the fact that we use a parametric approach, which obviously cannot provide a perfect fit to the data, and our observations are not completely independent.  

\begin{figure}[ht]
      \centering
      \caption{PIT histogram for the wind speed and the temperature, lead time $24h00$}
      \includegraphics[width = 0.48\textwidth]{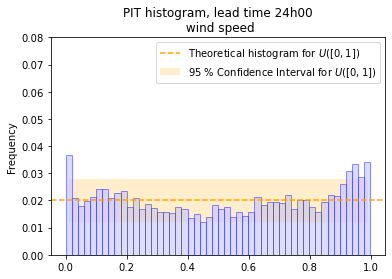}
      \includegraphics[width = 0.48\textwidth]{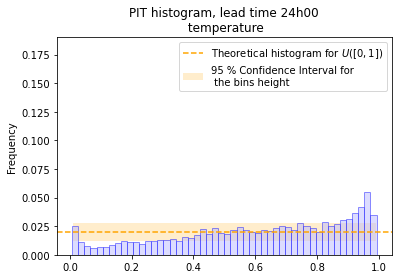}\label{Fig72}
\end{figure}

\newpage
\paragraph{Behavior of the predictive density in test data}
To illustrate the shape of the predictive density obtained with our approach, we displayed in Figure \ref{Fig3}, for a given realization date (22/01/15 at 12 a.m) and location, the predictive densities at each lead time as well as the realisation for the wind speed forecasts. 
We observe that the sharpness of the predictive distribution varies with the lead time. Interestingly it doesn't always improve as we approach the realisation time (e.g see the temperature predictive densities at lead times $12h$ and $24h$). This is to be compared with the simulated predictive distributions in Figures \ref{lnigci} and \ref{nigci}. 
\begin{figure}[ht]
      \centering
      \caption{Example of predictive densities produced by the model}
      \includegraphics[width = 0.47\textwidth]{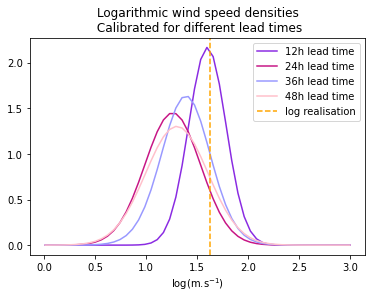}\label{Fig3}
      \includegraphics[width = 0.47\textwidth]{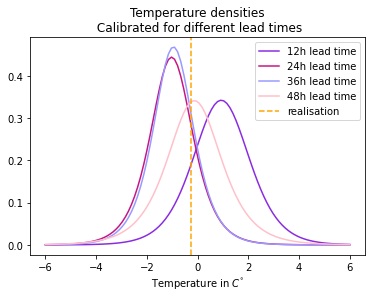}
\end{figure}

At the same location we plotted the evolution of the point forecast (first moment of the predictive distribution) for a fixed lead time and the corresponding realization over the period 22/01/15-31/01/15 in Figure \ref{Fig2}. We also show the confidence intervals around the point forecast. The realization always falls in the 90 $\%$ confidence interval and the width of the intervals varies throughout the simulation. The forecasts seem reasonably close to the realizations: this is especially true when the width of the confidence interval is small. 

\begin{figure}[ht]
      \centering
      \caption{Temperature and wind speed forecasts for the period 22/01/2015-31/01/2015}
      \label{Fig2}
    \includegraphics[width = 0.45\textwidth]{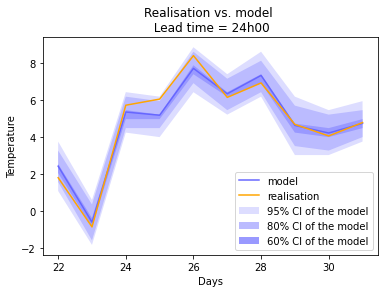}
      \includegraphics[width =0.45\textwidth]{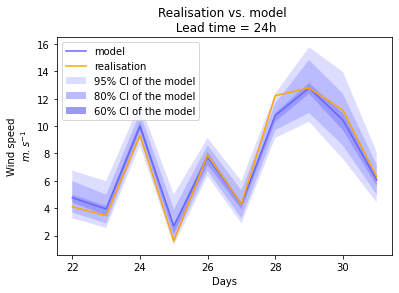}
      \includegraphics[width = 0.45\textwidth]{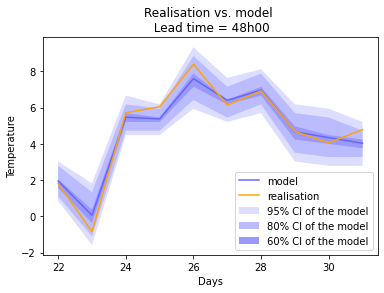}
      \includegraphics[width =0.45\textwidth]{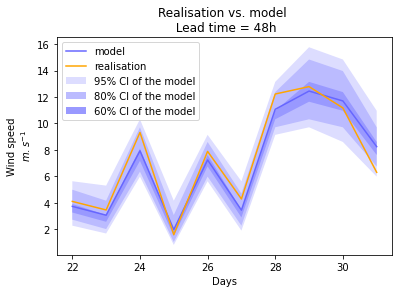}
\end{figure}

 Table \ref{CItraining} shows  the average width of the 90 $\%$ confidence interval, for the testing period: 22/01/2015-31/01/2015. 
For the first three lead times the width increases with lead time but the 36h lead time confidence intervals are on average sightly larger than the 48 h lead time ones. This confirms the intuition that the closer we get to the realisation date the less variations there are in forecasts updates (on average). On the other hand, finding a wider confidence interval average for the lead time $36h$ than for the lead time $48h$ is quite unexpected. We investigated the confidence interval width during the training period in Table \ref{tab:CI} and found that the confidence interval width always decreases as we approach the realization time. This suggests that the model may not be fully consistent with the test data, perhaps owing to a possible non-stationarity of the forecasts.

\begin{table}[ht]
    \centering
\begin{tabular}{|c|c|c|c|}
\hline
 Lead time&  90$\%$ CI width  & 90$\%$ CI width\\
 &Wind speed & Temperature \\
\hline
12h &       2.565 &2.605 \\
24h        &     2.869& 2.923\\
36h       &    3.381 &  3.301\\
48h        &      3.628 & 3.585\\
\hline
\end{tabular}
    \caption{Average 90 $\%$ confidence interval for the period 01/01-21/01 for the wind speed and the temperature}
    \label{tab:CI}
\end{table}

\begin{table}[ht]
    \centering
\begin{tabular}{|c|c|c|c|}
\hline
 Lead time&  90$\%$ CI width  & 90$\%$ CI width\\
 &Wind speed & Temperature \\
\hline
12h &       2.537 &2.592 \\
24h        &     2.801 & 2.747\\
36h       &    3.378 &  3.062\\
48h        &      3.254 & 2.956\\
\hline
\end{tabular}
    \caption{Average 90 $\%$ confidence interval for the period 22/01-31/01 for the wind speed and the temperature}
    \label{CItraining}
\end{table}

\newpage
\section{Application to wind power trading}\label{controlpb}
In this section we present an application of our methodology based on the dynamic modeling of probabilistic forecasts to the problem of wind power trading in the intraday electricity market.

\subsection{Description of the problem}\label{windtrading}

Consider a wind power producer who aims to sell the output power in
the intraday electricity market. To analyze the effect of market mechanisms we assume that there are no subsidies and no guaranteed purchase scheme. 
The intraday market opens every day at 3 p.m and allows continuous trading in all  delivery hours of the next day. For a given delivery hour $T$, we consider the energy produced during a small time interval
around this date $T$. We denote the average wind speed during this time
interval by $m_T$, and the power curve of the wind turbine by $f$,
that is, the rate of power production during this interval is given by
$P_T = f(m_T)$. For the purpose of illustration we choose the stylized production function $f$ defined by: 
$$f(m) = \frac{(m-m_{\min})^+ - (m - m_{\max})^+ }{m_{\max}-m_{\min}},$$
where  $m_{\min}$ is the cut-in speed (at which the turbine starts to produce), and $m_{\max}$ is the rated speed (at which the turbine produces its maximum power), but the methodology applies without modifications to any other production function. \\

This power can be sold at any time starting from the opening time of the intraday market up to 15 minutes before production.  The fraction traded at the date
$t$ will have the price $S_t$, and we denote the total amount of power (for delivery at $T$) sold or bought up to date $t$ by $\phi_t$. Any power not sold in the intraday market prior to date $T$ will be sold at date $T$ at the balancing price denoted by $S_T$. {In addition, balancing transactions are subject to imbalance penalty equal to a constant $K$ times the volume of the transaction.}
Throughout this section, we assume the following dynamics for the price:
\begin{align}\label{eq:PriceProcess}
    dS_t = \mu_S dt + \sigma_S dB_t, \quad \forall t \in [0,T]
\end{align}
where $\mu_S$ and $\sigma_S$ are constants and $(B_t)_{t\in [0,T]}$ is a Brownian motion.

 We make the assumption that the producer changes her position in the market only when a new forecast of the wind speed, and thus of the power production, becomes available. In other words, new trades are only triggered by new forecast information and not by price information which is available continuously. This is justified by the fact that most producers do not attempt to take advantage of potential price arbitrages but use the markets to compensate forecast errors. We denote by $t_0<\dots<t_{N-1}$ the discrete times at which the trades take place, $t_N = T$ being the time when the delivery starts. The profit of the producer is thus given by
 \begin{align*}
 \text{Profit} &= \underbrace{S_{t_0} \phi_{t_0} + \sum_{i=1}^{N-1} S_{t_i} (\phi_{t_i} - \phi_{t_{i-1}})  }_{\text{Intraday market}}+ \underbrace{S_T(f(m_T) - \phi_{t_{N-1}}) - K|f(m_T) - \phi_{t_{N-1}}|}_{\text{Imbalance payment}}\\
 & = f(m_T)S_T - \sum_{i= 0}^{N-1}
\phi_{t_i} \Delta S_{t_i}-K|f(m_T) - \phi_{t_{N-1}}|,
 \end{align*}
 where $\Delta S_{t_i} = S_{t_{i+1}}-S_{t_i}$. 

We assume that the
producer aims to maximize the utility of profit at date $T$, that is, she solves the following control problem: 
\begin{equation}\label{discretwindproblem}
    \max_{\phi := (\phi_{t_i})_{i = 1}^{N-1}} \mathbb E\left[u\left(f(m_T)S_T - \sum_{i= 0}^{N-1}
\phi_{t_i} \Delta S_{t_i}-K|f(m_T) - \phi_{t_{N-1}}| \right)\right],
\end{equation}
where $u$ is a utility function (concave, increasing and satisfying
certain regularity conditions),  and $\phi := (\phi_{t_i})_{i = 1}^{N-1}$ belongs to a certain class
of admissible strategies. In particular, the process $\phi$ must be
adapted with respect to the filtration of the agent, generated by the
history of the process $S$, the history of the forecast process $m$ and a measure of forecast uncertainty if it is stochastic.
For the numerical resolution we assume that the agent has an exponential CARA utility function given by: 
$$ u(x) = 1-e^{-\alpha x},\quad \alpha >0.$$

To assess the importance of modeling the dynamics of forecast uncertainty, in the next section we perform the following numerical experiment. 
\begin{itemize}
    \item We consider two models, model A, which describes the dynamic evolution of forecast uncertainty, and model B, which does not include such a description. Model A, detailed in section \ref{sec:pos}, uses the log-inverse Gaussian predictive distribution and the forecast evolution given by equations (\ref{pos1}--\ref{pos2}). Model B is a simplified version of model A, with a constant diffusion coefficient of the forecast process:
    \begin{align}\label{eq:constantVol}
\frac{dm_t}{m_t} &= \sigma_m dW_t,
\end{align} 
where $\sigma_m$ is a constant such that $\sigma_m = V_0$. 
Since empirical studies show a negative correlation between the market price and the wind production forecasts \cite{kiesel2017econometric, feron2020price}, we assume $\langle W, B \rangle_t = \lambda t, \; \lambda <0, \; \forall t \in [0, T]$.
\item For each model, we compute the optimal feedback strategies $\phi^A_{t_i}(S,m,V)$ and $\phi^B_{t_i}(S,m)$, for $i=0,\dots,N-1$ by solving the problem \eqref{discretwindproblem} using the  Least Squares Monte Carlo algorithm. 
\item We then simulate the prices using model A, and compute the profit of the producer with the feedback strategies $\phi^A$ and $\phi^B$. The difference between the two profit amounts allows to quantify the loss from using model B, that is, from not taking into account the dynamic evolution of the forecast uncertainty, when the data follow model A. 
\end{itemize}

\subsection{Numerical resolution}

In the first part of this section we present the Least Squares Monte Carlo algorithm used to solve the control problem. Next we detail the parameter values and finally compare the profits  obtained in the case of model A and model B. 

\paragraph{Least Square Monte Carlo algorithm}

We consider the equivalent problem 

\begin{equation}\label{equidiscretewindproblem}
    \min_\phi \quad \mathbb E\left[\exp\left\{-\alpha\left(f(m_T)S_T - \sum_{i= 0}^{N-1}
\phi_{t_i} \Delta S_{t_i}-K|f(m_T) - \phi_{t_{N-1}}|\right)\right\}\right],
\end{equation}
and define its value function at each time step {$t_i, \; i = 0, \dots, N-1$,}
\begin{align*}
    & v_{t_i}(X) = \min_{\phi_{t_i}, \dots, \phi_{t_{N-1}}} \mathbb E_{t_i}^{X}\left[\exp\left\{-\alpha\left(f(m_T)S_T - \sum_{k= i}^{N-1}
\phi_{t_k} \Delta S_{t_k}-K|f(m_T) - \phi_{t_{N-1}}|\right)\right\}\right]
\end{align*}
where $X = (S,m, V)$ for Model A  and $X = (S, m)$
for Model B. 

Exploiting the exponential structure of the utility function, the dynamic programming principle takes the following form. 
\begin{align*}
& v_{t_{N-1}}(X) = {\min_{\phi_{t_{N-1}}} \quad \mathbb E^{X}_{t_{N-1}} \left[ \exp\{\alpha( \phi_{t_{N-1}}\Delta S_{t_{N-1}}-f(m_{T_N})S_{T_N}+K|f(m_T) - \phi_{t_{N-1}}|)\}\right],}\\
& v_{t_i}(X) = \min_{\phi_{t_i}} \quad \mathbb E^{X}_{t_i} \left[ \exp\{\alpha \phi_{t_i}\Delta S_{t_i}\}\right],\; i = 0, \dots, N-2.
\end{align*}

For the numerical computation of the value functions, we use a regression approach based on adaptative local basis functions, described in Bouchard and Warin \cite{bouchard2012monte} and implemented in the open source library StOpt (The STochastic OPTimization library, see  \cite{gevret2018stochastic} for a detailed documentation). We briefly describe the algorithm below and refer the reader to \cite{bouchard2012monte, gevret2018stochastic} for further details. 

At each time step $t_i, \; i = 1,\dots, N$, the state space is partitioned into $Q$ cells, denoted by $D^{t_i}_{q},$ $q\in Q$, and on each cell, a linear local basis function $\psi_q$ is defined. We denote by $\beta_q\in \mathbb R^{d+1}$ the coefficients of the funciton $\psi_q$, where $d=3$ (resp.~d=2) is the dimension of the problem.



Let $(X^j_{t_i})^{j = 1, \dots M}_{i = 1, \dots, N}$ be the Monte Carlo simulations of the discretized version of the processes $X:= (S,m,V)$ for model A (resp $X:= (S,m)$ for model B). We call these simulations the \emph{learning set}. Let $\phi^\ell_{t_i},\ell= 1,\dots,L$ be the discretized values of the control at time $t_i$. The algorithm for computing the optimal strategies consists in the following steps, performed backward in time, starting from $i=N-1$.


\begin{itemize}
   \item[1.] For each point $\phi^\ell_{t_i}$, we determine the vector $\hat \beta_q (\phi^l_{t_i})$ as follows: 
\begin{align*}
& \hat \beta^{N-1}_q(\phi^\ell_{t_{N-1}}) = \text{argmin}_{\beta_q \in \mathbb R^{d+1}}\\ &\qquad \sum_{x \in D^{t_{N-1}}_q} \left[  \psi_q(\beta_q,x) - \exp\{\alpha( \phi_{t_{N-1}}\Delta S_{t_{N-1}}-f(m_{T_N})S_{T_N}+K|f(m_T) - \phi_{t_{N-1}}|)\}\right]^2, \\
    & \hat \beta^i_q(\phi^\ell_{t_i}) = \text{argmin}_{\beta_q \in \mathbb R^{d+1}} \sum_{x \in D^{t_i}_q} \left[  \psi_q(\beta_q, x) - \hat v_{t_{i+1}}\times \exp\{\alpha \phi^\ell_{t_i}\Delta S_{t_i}\}\right]^2, \; i = 0, \dots, N-2
\end{align*}

    \item[2.] We then define the regression estimators for the conditional expectations: 
   \begin{align}\label{eq: VFreg}
       \widehat{e}_i^M(x,\phi^\ell_{t_i}) = \sum_{q = 1}^Q \psi_q(\beta_q(\phi^\ell_{t_i}), x) \mathbbm{1}_{x \in D_q}, \; i = 0, \dots, N-1
   \end{align} 
\item[3.]
The optimal feedback strategy and the value function are estimated as follows. $$\hat\phi^*_{t_i}(x) = \text{argmin}_{\phi_{t_i}^\ell, \, \ell = 1, \dots, L} \widehat{e}_i^M(x,\phi^\ell_{t_i}), \quad \hat v_{t_i}(x) = \widehat{e}_i^M(x,\hat\phi^*_{t_i}(x)) $$
\end{itemize}


    

To compare the gains in model A and in model B we then simulate $M'$ new trajectories of model A (the \emph{testing set}) and compute the gains using the optimal feedback strategies derived above.
In the next paragraph we detail the choice of parameters.

\paragraph{Setting of the numerical illustrations}
We use the empirical data and the parameters estimated in Section \ref{calibration} for the simulations of the present illustration. 

We consider the delivery hour $12$ PM (noon),
and assume that forecast updates become available and the trading takes place at 12 PM on the previous day (this corresponds to the day-ahead trade), at 6PM on the previous day, at midnight, at 6AM on the delivery date, and at 12PM on the delivery date. Letting $t=0$ correspond to 12PM of the day preceding the delivery day, we then have: $N=4$ and $(t_0,t_1,t_2,t_3,t_4) = (0H, 6H, 12H, 18H, 24H)$.
The parameters of Model A are estimated as explained in Section \ref{methodcali}, and for model B we fix $m_0$ to the same value as in Model A, and $\sigma_m \rho\sqrt{V_0}$. The value of price volatility is calibrated as explained in \cite{feron2020price}. The drift $\mu_S$, is not fixed for now, we will make it vary in the next paragraph for our study.
The absolute risk aversion coefficient $\alpha$ was chosen in an ad hoc manner, but in such a way that its numerical value is compatible with the average value of producer's revenues. 
The values of all parameters are summarized in Table \ref{parameters}. 

\begin{table}[ht]
\centering
   \begin{tabular}{ |c|c||c|c|}
     \hline
     Parameter & Value & Parameter & Value  \\ \hline
     $S_0$ & $40$ \euro{}/MWh   &  $V_0$ &  0.032  \\ \hline
     $\sigma^{S}$ & $6$ \euro{}/MWh.h$^{1/2}$ &$\rho$ & 0.16  \\ \hline
     $m_0$ & 5.38 m/s &
       $b$ & 0.035  \\
       \hline
      $\lambda$ & -0.08 & $\alpha$ & 0.01 \euro{}$^{-1}$ \\ \hline
    $m_{\min}$ & 3.3 m/s & $m_{\max}$ & 25 m/s \\ \hline
     $K$ & 10 \euro{}/MWh &  &\\ \hline
   \end{tabular}
   \caption{Parameters of the model}
   \label{parameters}
 \end{table}

For the estimation of the conditional expectations (the training set) we use $M = 200000$ MC trajectories, and the grid size $Q=15\times 15$ for the two-dimensional model and $Q = 15\times 15\times 15$ for the three-dimensional model. The actual grids are determined by the algorithm in an adaptive manner. \\

To evaluate the gains of the two strategies (the test set), we use $M'= 1 000 000$ trajectories. 
The control values (position of the agent) are also discretized on a grid, which depends on the setting of the problem. In the numerical experiments presented in the next section, we consider three different settings. When the price is martingale, we allow for positions between $-1$ and $1$ with a step size of $0.01$. When the price has a positive or negative trend, we allow for positions between $-5$ and $5$ with a step size size $0.05$.  
These bounds in accordance with the observed shapes of the strategies during experiments, and by making a trade-off between accuracy and the computational cost.

 \paragraph{Results}
 After deriving the optimal strategies  $(\phi^{j*}_{t_i})_{0\le i \le N-1}$ for model A and model B, we then simulate $M'$ new trajectories under model A and compute the realized profit for the two producers:
 $$f(m^j_T)S^j_T - \sum_{i= 0}^{N-1}
\phi^{j*}_{t_i} \Delta S^j_{t_i}.$$
This computation is repeated multiple times to evaluate the sensitivity of the average profit to the presence of price trend and the wind forecast uncertainty level. 
Figure \ref{Fig:windspeed} shows the range of wind speeds for different wind forecast uncertainty levels. 


\begin{figure}[ht]
      \centering
      \caption{Wind speed range as function of the uncertainty parameter.}
      \includegraphics[width = 0.6\textwidth]{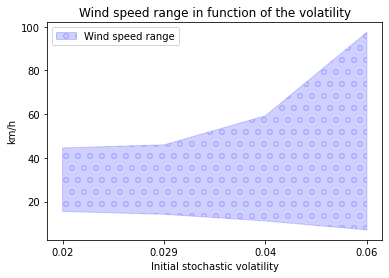}
      \label{Fig:windspeed}
\end{figure}



\begin{figure}[ht]
    \centering
      \caption{Realized profits for different values of the price trend }\label{Fig:profits}
      \includegraphics[width = 0.48\textwidth]{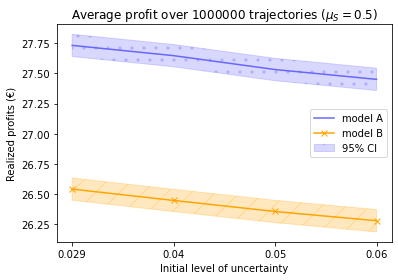}
      \includegraphics[width = 0.48\textwidth]{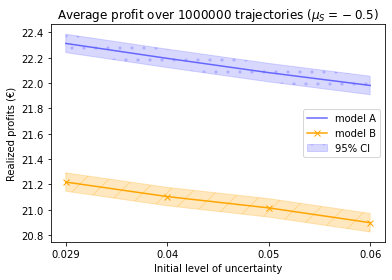}
      \includegraphics[width = 0.48\textwidth]{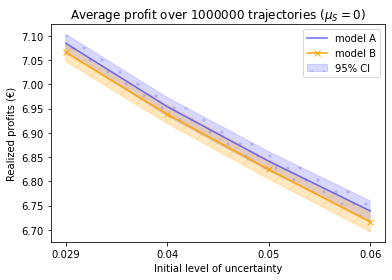}
\end{figure}

 { The simulated average profit for the two producers is shown in Figure \ref{Fig:profits}.  We see that in both models, the average profit decreases as function of the forecast uncertainty level. The presence of a price trend leads both for higher profits for the two models and a clear improvement of performance of model A compared to model B. 
In the case where the price is a martingale the result is not obvious since the confidence intervals of the average profits overlap to a large extent: this result  needs to be further investigated. 
 The observations made from Figure \ref{Fig:profits} are  confirmed in Table \ref{tab:profits} where we show the relative profits computed as follows: 
 $$\frac{\text{Profit(Model A)}-\text{Profit(Model B)}}{\text{Profits(Model B)}}\times 100.$$

\begin{table}[ht]
    \centering
\begin{tabular}{|c|c|c|c|}
\hline
$\mu_S$ (\euro{}/MWh.h) &  Relative profits \\
& in $\%$\\
\hline
$0$        &      $+0.27$ \\
$0.5$       &     {$\mathbf{+4.49}$}  \\
$-0.5$       &      {$\mathbf{+ 5.15}$} \\
\hline
\end{tabular}
    \caption{Relative profits of Model A (taking into account stochastic forecast uncertainty) compared to Model B. Bold values are statistically significant at 95\% level. }
    \label{tab:profits}
\end{table}

 In the literature, estimations of the trend in  the intraday electricity market price are scarce. A recent empirical study by Glas et Al. \cite{glas2020intraday} reports the presence of a small trend composed of a constant part (0.0433 \euro{}/MWh$^2$) and a permanent price impact (0.0017 \euro{}/MWh$^2$). Hence taking into account the evolution of the forecast uncertainty as it is done in the model we propose seem to be adapted to the market reality and impacts the strategies in a way that increases significantly the profits.}

\section*{Appendix}
\subsection{Talagrand diagrams and PIT histograms for lead times 36h and 48h}\label{HistoAppend}
\begin{figure}[ht]
      \centering
      \caption{Talagrand diagrams for the wind speed and the temperature, lead time $36h00$}
      \includegraphics[width = 0.48\textwidth]{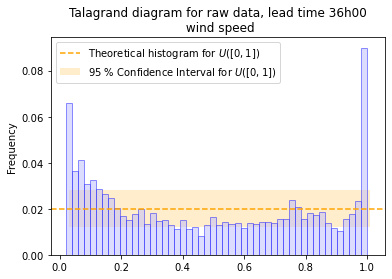}
      \includegraphics[width = 0.48\textwidth]{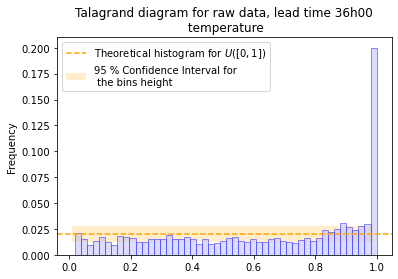}
\end{figure}

\begin{figure}[ht]
      \centering
      \caption{PIT histogram for the wind speed and the temperature, lead time $36h00$ }
      \includegraphics[width = 0.48\textwidth]{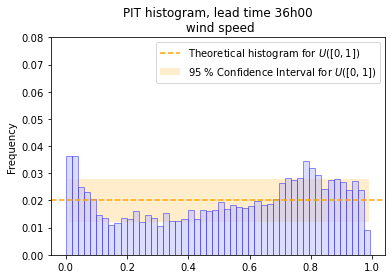}
      \includegraphics[width = 0.48\textwidth]{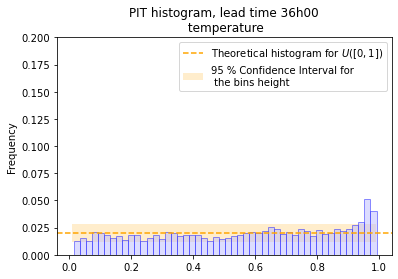}
\end{figure}
\newpage

\begin{figure}[ht]
      \centering
      \caption{Talagrand diagrams for the wind speed and the temperature, lead time $48h00$}
      \includegraphics[width = 0.48\textwidth]{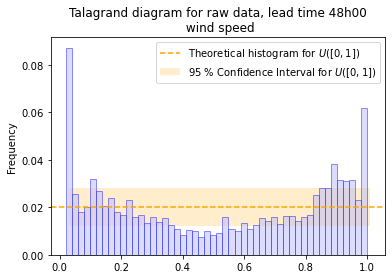}
      \includegraphics[width = 0.48\textwidth]{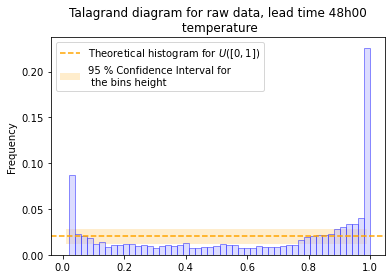}
\end{figure}

\begin{figure}[ht]
      \centering
      \caption{PIT histogram for the wind speed and the temperature, lead time $48h00$ }
      \includegraphics[width = 0.48\textwidth]{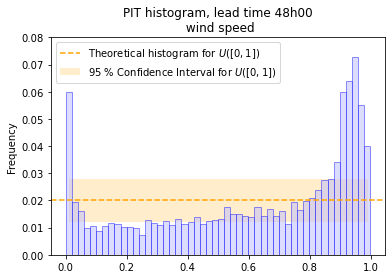}
      \includegraphics[width = 0.48\textwidth]{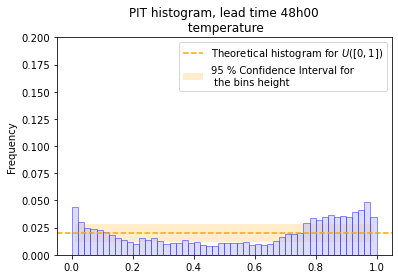}
\end{figure}

\newpage

\end{document}